\newcommand{\margin}[1]{\marginpar{\tiny\color{blue} #1}}
\begin{document}
	\title{A Coupled Friedkin-Johnsen Model \\ of Popularity Dynamics in Social Media}
	\author{Gaya Cocca\thanks{G.~Cocca was a master's student of Politecnico di Torino, Corso Duca Degli Abruzzi 24, 10129, Torino, Italy. 
			E-mail: s316763@studenti.polito.it.}, Paolo Frasca\thanks{P.~Frasca is with Univ.\ Grenoble Alpes, CNRS, Inria,  Grenoble INP, GIPSA-lab, F-38000 Grenoble, France. E-mail: paolo.frasca@gipsa-lab.fr.}, Chiara Ravazzi\thanks{C.~Ravazzi is with the National Research Council of Italy (CNR-IEIIT), c/o Politecnico di Torino, Corso Duca Degli Abruzzi 24, 10129, Torino, Italy. 
			E-mail: chiara.ravazzi@cnr.it.}
		\thanks{This work has been supported by ANR grant FeedingBias (ANR-22-CE38-0017-01) and 
        by the European Union – Next Generation EU, Mission 4, Component 1, under the PRIN project TECHIE: ``A control and network-based approach for fostering the adoption of new technologies in the ecological transition'' Cod. 2022KPHA24 CUP Master: D53D23001320006, CUP: B53D23002760006.}
	}
	\newtheorem{definition}{Definition}
	\newtheorem{example}{Example}
	\newtheorem{ass}{Assumption}
	\newtheorem{trule}{Rule}
	\newtheorem{corollary}{Corollary}
	\newtheorem{theorem}{Theorem}
	\newtheorem{lemma}{Lemma}
	\newtheorem{assumption}{Assumption}
	\newtheorem{proposition}{Proposition}
	\newtheorem{remark}{Remark}
	\newtheorem{problem}{Problem}
	\newtheorem{conjecture}{Conjecture}
	
	\newcommand{\Ecal}{\mathcal{E}}
	\newcommand{\Gcal}{\mathcal{G}}
	\newcommand{\Mcal}{\mathcal{M}}
	\newcommand{\Ncal}{\mathcal{N}}
	\newcommand{\Vcal}{\mathcal{V}} 
	\newcommand{\mc}{\mathcal}
	\newcommand{\be}{\begin{equation}}
		\newcommand{\ee}{\end{equation}}
	\newcommand{\ba}{\begin{array}}
		\newcommand{\ea}{\end{array}}
	\newcommand{\tcr}{\textcolor{red}}
	
	\newcommand{\norm}[1]{\|#1\|}
	\newcommand{\abs}[1]{|#1|}
	\newcommand{\rk}{\mathrm{rank}}
	\newcommand{\complex}{\mathbb{C}}
	\newcommand{\real}{\mathbb{R}}
	\newcommand{\realpositive}{\mathbb{R}_{>0}}
	\newcommand{\realnonnegative}{\mathbb{R}_{\geq0}}
	\renewcommand{\natural}{\mathbb{N}}
	\newcommand{\integer}{\mathbb{Z}}
	\newcommand{\integernonnegative}{\mathbb{Z}_{\ge 0}}
	\newcommand{\setdef}[2]{\{#1 \, : \; #2\}}
	\newcommand{\map}[3]{#1: #2 \rightarrow #3}
	\renewcommand{\l}{\left}
	\renewcommand{\r}{\right}
	\newcommand{\R}{\mathbb{R}} 
	\newcommand{\N}{\mathbb{N}}  
	\newcommand{\Z}{\mathbb{Z}}  
	\newcommand{\G}{\mathcal{G}} 
	\newcommand{\V}{\mathcal{V}} 
	\newcommand{\E}{\mathcal{E}} 
	
	\newcommand{\neigh}{ \mathcal{N}} 	
	\newcommand{\card}[1]{|#1|}  	
	
	\newcommand{\Exp}{\mathds{E}} 
	\def\Exp{\mathbb{E}}
	\def\Prob{\mathbb{P}}
	
	\newcommand{\ave}{\textup{ave}}\newcommand{\loc}{{\subscr{x}{loc}^{\star}}} 
	\newcommand{\op}{{\subscr{x}{opd}^{\star}}} 
	
	\newcommand{\VAR}{\mathrm{VAR}} 
	\newcommand{\X}{\mathcal{X}} 
	\newcommand{\e}{\textbf{e}} 
	\newcommand{\1}{\mathds{1}} 
	\newcommand{\ind}{\mathds{1}} 
	\newcommand{\diag}{\operatorname{diag}}  
	\newcommand{\by}{\boldsymbol{y}}  
	\newcommand{\vy}{{\boldsymbol{y}}}  
	\newcommand{\bx}{\boldsymbol{x}}  
	\newcommand{\bc}{\boldsymbol{c}}  
	\newcommand{\bpi}{\boldsymbol{\pi}}  
	\newcommand{\bW}{\boldsymbol{W}}  
	\newcommand{\bP}{\boldsymbol{P}}  
	\newcommand{\bU}{\boldsymbol{U}}  
	\newcommand{\bTheta}{\boldsymbol{\Theta}}  
	\newcommand{\bId}{\boldsymbol{I}}
	\newcommand{\btheta}{\boldsymbol{\theta}}  
	\newcommand{\bM}{\boldsymbol{M}}  
	\newcommand{\veta}{\boldsymbol{\eta}}  
	
	\newcommand{\argmax}[1]{\underset{#1}{\mathrm{argmax\,}}}
	\newcommand{\argmin}[1]{\underset{#1}{\mathrm{argmin\,}}}

	\newcommand{\eps}{\varepsilon} 

	\newcommand{\supp}{\mathrm{supp}}
	\newcommand{\interior}[1]{\overset{\small{\circ}}{#1}}
	\newcommand{\Sup}[1]{\underset{#1}{\mathrm{sup\,}}}
	\newcommand{\Max}[1]{\underset{#1}{\mathrm{max\,}}}
	\newcommand{\Lim}[1]{\underset{#1}{\mathrm{lim\,}}}
	\newcommand{\vx}{\boldsymbol{x}}
	\newcommand{\vQ}{\boldsymbol{Q}}
	\newcommand{\ve}{\boldsymbol{e}}
	\newcommand{\vr}{\boldsymbol{r}}
	\newcommand{\vb}{\boldsymbol{b}}
	\newcommand{\vf}{\boldsymbol{f}}
	\newcommand{\va}{\boldsymbol{a}}
	\newcommand{\vxi}{\boldsymbol{\xi}}
	\newcommand{\vl}{\boldsymbol{l}}
	\newcommand{\vz}{\boldsymbol{z}}
	\newcommand{\vk}{\boldsymbol{k}}
	\newcommand{\vu}{\boldsymbol{u}}
	\newcommand{\vv}{\boldsymbol{v}}
	\newcommand{\vw}{\boldsymbol{w}}
	\newcommand{\vc}{\boldsymbol{c}}
	\newcommand{\vp}{\boldsymbol{p}}
	\newcommand{\vq}{\boldsymbol{q}}
	\newcommand{\vd}{\boldsymbol{d}}
	\newcommand{\vh}{\boldsymbol{h}}
	\newcommand{\vm}{\boldsymbol{m}}
	\newcommand{\vg}{\boldsymbol{g}}
	\newcommand{\vX}{\boldsymbol{X}}
	\newcommand{\vsigma}{\boldsymbol{\sigma}}
	\newcommand{\vSigma}{\boldsymbol{\Sigma}}
	\newcommand{\vlambda}{\boldsymbol{\lambda}}
	\newcommand{\vomega}{\boldsymbol{\omega}}
	\newcommand{\vtheta}{\boldsymbol{\theta}}
	\newcommand{\vP}{\boldsymbol{P}}
	\newcommand{\vA}{\boldsymbol{A}}
	\newcommand{\valpha}{\boldsymbol{\alpha}}
	\newcommand{\vgamma}{\boldsymbol{\gamma}}
	\newcommand{\vG}{\boldsymbol{G}}
	\newcommand{\vN}{\boldsymbol{N}}
	\newcommand{\vV}{\boldsymbol{V}}
	\newcommand{\vU}{\boldsymbol{U}}
	\newcommand{\vM}{\boldsymbol{M}}
	\newcommand{\vH}{\boldsymbol{H}}
	\newcommand{\vOmega}{\boldsymbol{\Omega}}
	\newcommand{\vZ}{\boldsymbol{Z}}
	\newcommand{\vzeta}{\boldsymbol{\zeta}}
	\newcommand{\vY}{\boldsymbol{Y}}
	\newcommand{\vPhi}{\boldsymbol{\Phi}}
	\newcommand{\veceta}{\boldsymbol{\eta}}

	\newcommand{\figref}[1]{Fig.~\ref{#1}}
	\newcommand{\secref}[1]{Sec.~(\ref{#1}}
	\newcommand{\vect}[1]{\mathbf{#1}}
	\newcommand{\dsum}{\displaystyle\sum}
	\newcommand{\ie}[0]{{\em i.e., }}
	\newcommand\imag{\operatorname{i}}
	\newcommand\dd{\operatorname{d}}

	\newcommand{\field}[1]{\mathds{#1}}
	\newcommand{\Msf}{\Lambda}
	
	\newcommand{\Commento}[1]{\\---\textit{\textbf{#1}}---\\}
	\pagestyle{empty} 
	\maketitle
	\thispagestyle{empty} 
	
	\begin{abstract}
Popularity dynamics in social media depend on a complex interplay of social influence between users and popularity-based recommendations that are provided by the platforms.
In this work, we introduce a discrete-time dynamical system to model the evolution of popularity on social media. Our model generalizes the well-known Friedkin-Johnsen model to a set of influencers vying for popularity.
We study the asymptotic behavior of this model and illustrate it with numerical examples. Our results highlight the interplay of social influence, past popularity, and content quality in determining the popularity of influencers.
\end{abstract}
	
%


\section{Introduction}
Social media play a prominent role in contemporary societies. Therefore, it is important to understand how the popularity of contents, topics, and influencers evolves therein. 
Over the years, various kinds of models from different scientific fields have been proposed to describe the dynamics of popularity. These models include epidemic models\cite{Richier}, self-exciting processes\cite{Crane_2008}, and Bass-like diffusion models~\cite{gao}. Although building upon different basic principles, most of them emphasize that popularity is shaped by social influence between users and by the recommendation systems deployed by the platforms~\cite{bressan2016limits}.
Significant scholarly work, including mathematical models, has been produced on these topics.
Within the large literature about social influence, which in part predates the rise of social media, we single out the celebrated Friedkin-Johnsen model~\cite{FriedJon} and its variations, including those that account for multiple coupled dynamics~\cite{friedkin2016network}. 
Regarding the recommendation systems, even though their specifics are not disclosed by the platforms, they are known to leverage past popularity and user preferences with the goal of increasing user engagement~\cite{Covington16,castaldo2021junk,castaldo2023online}.
The objective of the paper is to propose and study a conceptual model, based on ``first principles'', which has the goal of exploring how the interplay of social influence and recommendations can play out during the processes of content consumption in social media. 
Our contribution lies in the formulation and analysis of an original model for the evolution of popularity on social media. Our model describes the popularity of influencers as the aggregate of the attention that their contents receive from a population of users. The attention of each user towards each influencer evolves through a nonlinear dynamics that takes into account three key ingredients: 
(i) the social influence due to interactions between users; (ii) the competition among influencers, mediated by the popularity-based recommendations by the platform; and
(iii) the intrinsic quality of the contents produced by each influencer. The model is akin to a set of coupled Friedkin-Johnsen dynamics: each of them describes the attention that the users give to one influencer, while the nonlinear coupling accounts for the aggregate popularity of the influencers.

Our model is able to capture two distinct scenarios. When content quality is irrelevant, the attention levels that are given to influencers converge to consensus, that is, tend to become uniform across users: popularity is only driven by social contagion effects and by the recommendations that amplify the popularity of already popular content. Instead, when quality is relevant, user attention converges toward a more heterogeneous limit profile, and quality determines popularity. 
This dichotomy highlights how sensitivity to quality can serve as a key factor in shaping the collective dynamics of popularity on social media.

This paper is organized as follows.
Section~\ref{section:model} describes the mathematical model and  Section~\ref{section:results} studies its steady-state behavior. We first concentrate on the degenerate cases when there is no network effect or no quality effect, then consider the generic case with all effects. Section~\ref{sect:discussion} discusses the results, their relations and limitations, and their potential implications for the study of popularity of social media. Section~\ref{sect:conclusion} provides some conclusions and directions for future work.

\section{The model} \label{section:model}
We consider a set $\mathcal{V}$ of $n$ users of a social media platform: each of them can access contents from a set $\mathcal{I} $ of influencers at each instant of time $t\in\mathbb{N}\cup\{0\}$. 
Each user $v\in\mathcal{V}$ is endowed with a scalar value $x_v^{(i)}(t)\in[0,1]$, which quantifies their attention to contents posted by influencer $i\in\mathcal{I}$ at time $t$. Our objective is to develop a model to describe the evolution of this attention over time, capturing the interplay of the following factors:
\begin{itemize}
\item the social influence between users, interconnected by a social network;
\item the effect of recommendations proposed by the platform according to the influencers' previous popularity; 
\item the intrinsic quality of the contents produced by the influencers.
\end{itemize}

\textit{Modeling network effects:} By network effects, we mean the phenomenon whereby a user's attention  towards an influencer is directly or indirectly influenced by the attention given by other users. 
The ability of users to interact and influence each other can be formalized using a graph 
$\mathcal{G}=(\mathcal{V},\mathcal{E})$, where $\mathcal{E}\subseteq\mathcal{V}\times\mathcal{V}$ describes the potential interactions. 
We will use a \textit{weighted-adjacency matrix $P$}, associated with $\mathcal{G}$, in order to encode the intensity of these interactions. The matrix $P$ is \textit{row stochastic}, i.e., has nonnegative entries and $P\mathds{1}_n=\mathds{1}_n$ (where $\mathds{1}_n$ denotes a vector with $n$ entries, all equal to $1$). A node \( v\in\mathcal{V} \) in a directed graph is said to be \emph{aperiodic} if the greatest common divisor of the lengths of all cycles starting and ending at \( v \) is equal to 1. 

\textit{Modeling recommendations:}
We assume that the effect of recommendations depends on a normalized popularity index $\pi^{(i)}(t)$, which is an aggregate measure of the attention given to previous influencer content. We thus choose to define the popularity index of each influencer $i\in\mathcal{I}$ as
\[
\pi^{(i)}(t) = {\sum_{v \in \mathcal{V}} x^{(i)}_v(t)}\bigg/{\sum_{j \in \mathcal{I}} \sum_{w \in \mathcal{V}} x^{(j)}_w(t)}.
\]
Using such a normalized measure of popularity accounts for constraints on the global potential audience, which are referred to as ``finite carrying capacity'' of the medium in classical works about public arenas~\cite{hilgartner1988rise} and have been successfully included in recent mathematical models of social media~\cite{castaldo2021junk}.

\textit{Modeling content quality:} 
We also suppose that intrinsic factors, such as the quality of the contents proposed by the influencers, act as persistent input in the attention dynamics: the quality affects users by the exposure to the contents.
The quality of contents posted by influencer $i$ will be denoted by $q^{(i)}\in[0,1]$, with the intuition that higher quality contents should receive more attention. 

The contributions of interactions, popularity, and quality are weighted by nonnegative coefficients $\alpha_v$, $\beta_v$ and $\gamma_v$, 
such that $
\alpha_v + \beta_v + \gamma_v = 1.
$
Therefore, we assume that  the dynamic of the attention $x_v^{(i)}(t)$ evolves for all $v\in\mathcal{V}$ according to:
\begin{equation}\label{eq:mod_iniziale_det}
x_v^{(i)}(t+1) = \alpha_v \sum_{w \in\mathcal{V}} P_{vw} x^{(i)}_w(t) + \beta_v \pi^{(i)}(t) + \gamma_v q^{(i)},
\end{equation}
which can be rewritten in vector form as
\begin{equation*}\label{eq:sist_completo_det_matrici}
x^{(i)}(t+1)=APx^{(i)}(t)+B\pi^{(i)}(t)+\Gamma q^{(i)},
\end{equation*}
where $A=\mathrm{diag(\alpha)}$, $B=\mathrm{diag(\beta)}$, and $\Gamma=\mathrm{diag(\gamma)}$.

\section{Qualitative and numerical analysis} \label{section:results}

In this section, we study the asymptotic behavior of the dynamics. We will proceed by first considering, in Section~\ref{section:alpha0}, the case of $\alpha_v=0$ for all $v \in \mathcal{V}$, that is, when there is no social influence.  
Next, Section~\ref{section: gamma0} is devoted to the case with $\gamma_v=0$ for all $v\in\mathcal{V}$, that is, when quality plays no role and only network effects and recommendations influence the dynamics.
Notice that in the case $\beta_v=0$ for all $v\in\mathcal{V}$, that is, when only the network effect and the quality play a role in the dynamics, the system reduces to a set of $n$ decoupled \textit{Friedkin-Johnsen} models~\cite{FriedJon}. 
Finally, in Section~\ref{section:tuttitermini}, we will study what happens in the general case when all terms give contribution to the dynamics. 

\subsection{Asymptotic Behavior Without Network Effects} \label{section:alpha0}
In the dynamics described by equation \eqref{eq:mod_iniziale_det}, each future attention value is influenced by user interactions in the network described by the weighted-adjacency matrix~$P$. 
Assuming $\alpha_v=0 \ \forall v\in\mathcal{V}$ eliminates the contribution of interactions, retaining only the influence from past popularity values and other factors. By taking into account that in this case $\gamma_v=1-\beta_v$, we can rewrite \eqref{eq:mod_iniziale_det} as follows: 
\begin{equation}\label{equation:alpha_equal_0}
x_v^{(i)}(t+1) =  \beta_v \pi^{(i)}(t) + (1-\beta_v) q^{(i)}.
\end{equation}

The following result guarantees that for any initial condition the dynamics in \eqref{equation:alpha_equal_0} converges asymptotically to a limit point.

\begin{theorem}[Convergence -- No network]\label{proposition:alpha0}
Let $\alpha_v=0$ for all $v\in\mathcal{V}$ and denote $q_{\mathrm{tot}}=\sum_{i\in\mathcal{I}}q^{(i)}$. If $\exists \ v\in\mathcal{V} $ s.t. $\beta_v\neq1$ and  $\exists i\in\mathcal{I}$ s.t. $q^{(i)}>0$, then, for any initial condition $\pi(0)=\pi_0$, the dynamics \eqref{equation:alpha_equal_0} is such that $\forall i\in\mathcal{I}$,  $\forall v\in\mathcal{V}$,
$$\lim_{t\rightarrow\infty}\pi^{(i)}(t)=\frac{q^{(i)}}{q_{\rm{tot}}},\ 
\lim_{t\rightarrow\infty}x^{(i)}_v(t)=\beta_v\frac{q^{(i)}}{q_{\rm{tot}}}+(1-\beta_v)q^{(i)}.
$$ 
\end{theorem}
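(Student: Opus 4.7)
The plan is to reduce the evolution of the attention profile to a scalar affine recursion on the popularity indices $\pi^{(i)}(t)$, which will be a contraction under the hypotheses of the theorem, and then read off the limit of the $x_v^{(i)}(t)$ directly from the update rule.

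First, I would sum \eqref{equation:alpha_equal_0} over $v\in\mathcal{V}$. Setting $B_{\mathrm{tot}}:=\sum_{v\in\mathcal{V}}\beta_v$, this yields
\begin{equation*}
\sum_{v\in\mathcal{V}} x_v^{(i)}(t+1) = B_{\mathrm{tot}}\,\pi^{(i)}(t) + (n-B_{\mathrm{tot}})\,q^{(i)}.
\end{equation*}
Summing again over $i\in\mathcal{I}$ and using $\sum_{i}\pi^{(i)}(t)=1$ (valid as soon as the normalising denominator is nonzero, which I will verify holds for $t\ge 1$ since $B_{\mathrm{tot}}+(n-B_{\mathrm{tot}})q_{\mathrm{tot}}>0$ under our assumptions), the denominator of $\pi^{(i)}(t+1)$ becomes the constant $B_{\mathrm{tot}}+(n-B_{\mathrm{tot}})q_{\mathrm{tot}}$. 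Dividing, I obtain the closed scalar recursion
\begin{equation*}
\pi^{(i)}(t+1) = \lambda\,\pi^{(i)}(t) + (1-\lambda)\,\frac{q^{(i)}}{q_{\mathrm{tot}}}, \qquad \lambda:=\frac{B_{\mathrm{tot}}}{B_{\mathrm{tot}}+(n-B_{\mathrm{tot}})q_{\mathrm{tot}}}.
\end{equation*}

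The hypothesis that some $\beta_v\neq 1$ gives $n-B_{\mathrm{tot}}>0$, and the hypothesis that some $q^{(i)}>0$ gives $q_{\mathrm{tot}}>0$; together these imply $\lambda\in[0,1)$. Hence the recursion is a strict contraction with unique fixed point $q^{(i)}/q_{\mathrm{tot}}$, so $\pi^{(i)}(t)\to q^{(i)}/q_{\mathrm{tot}}$ geometrically for every initial $\pi(0)=\pi_0$ and every $i\in\mathcal{I}$. Plugging this limit back into \eqref{equation:alpha_equal_0} gives the stated limit of $x_v^{(i)}(t)$.

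I do not anticipate any serious obstacle: the argument is essentially a one-line algebraic manipulation combined with Banach contraction in one variable. The only delicate point is bookkeeping to ensure $\pi^{(i)}(t)$ is well-defined at each step (i.e., the denominator is strictly positive), but this follows immediately once we observe that from $t=1$ onwards every $x_v^{(i)}$ satisfies $x_v^{(i)}(t)\ge (1-\beta_v)q^{(i)}\ge 0$ with at least one strict inequality, so the common denominator $B_{\mathrm{tot}}+(n-B_{\mathrm{tot}})q_{\mathrm{tot}}$ is the same positive constant at every time.
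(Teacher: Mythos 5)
Your proposal is correct and follows essentially the same route as the paper: summing the update over users (and then over influencers) to obtain a closed scalar affine recursion for $\pi^{(i)}(t)$ with constant coefficient $\lambda\in[0,1)$, concluding by asymptotic stability, and substituting the limit back to get the $x_v^{(i)}$ limits. Your version with $B_{\mathrm{tot}}=n\overline{\beta}$ is just a rescaling of the paper's recursion, and your extra remark on the positivity of the normalising denominator is a harmless (and welcome) bit of bookkeeping the paper leaves implicit.
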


\begin{proof}
By denoting $\overline{\beta}=\frac{1}{n}\sum_{v\in\mathcal{V}}\beta_v$  and $\overline{x}^{(i)}(t) = \frac{1}{n}{\sum_{v \in V} x^{(i)}_v(t)}$, from
\eqref{equation:alpha_equal_0}, we obtain
\[
x_v^{(i)}(t+1) = \beta_v \frac{\overline{x}^{(i)}(t)}{{\sum_{j \in \mathcal{I}}\overline{x}^{(j)}(t)}} + (1-\beta_v) q^{(i)},
\]
from which
$
\sum_{i \in \mathcal{I}}\overline{x}^{(i)}(t+1)= \overline{\beta} + (1-\overline{\beta}) q_{\mathrm{tot}}
$
and
\[
\pi^{(i)}(t+1) = \frac{\overline{\beta}}{\overline{\beta} + (1-\overline{\beta}) q_{\mathrm{tot}}} \pi^{(i)}(t) + \frac{(1-\overline{\beta})}{\overline{\beta} + (1-\overline{\beta}) q_{\mathrm{tot}}} q^{(i)}.
\]
Since $q_{\mathrm{tot}}>0$ and $\overline{\beta}<1$, this scalar dynamics is asymptotically stable and the statement follows immediately.
\end{proof}

We observe that, in this case, the steady state is fully determined by the intrinsic quality of the influencers.
Numerical simulations confirm the result, showing 
convergence for any initial condition. In Figure~\ref{fig:alpha}, we present a simple example. 
We consider three influencers and 20 users and we set  $q^{(i)}=(0.3,0.7,0.5)$ and the parameters $\beta_v$ sampled from a uniform distribution over $[0,1]$. The initial conditions $x_v^{(i)}(0)$ are also sampled from uniform distributions on $[0,1]$. 
\begin{figure}
\includegraphics[width=0.49\columnwidth,trim=10 0 10 0]{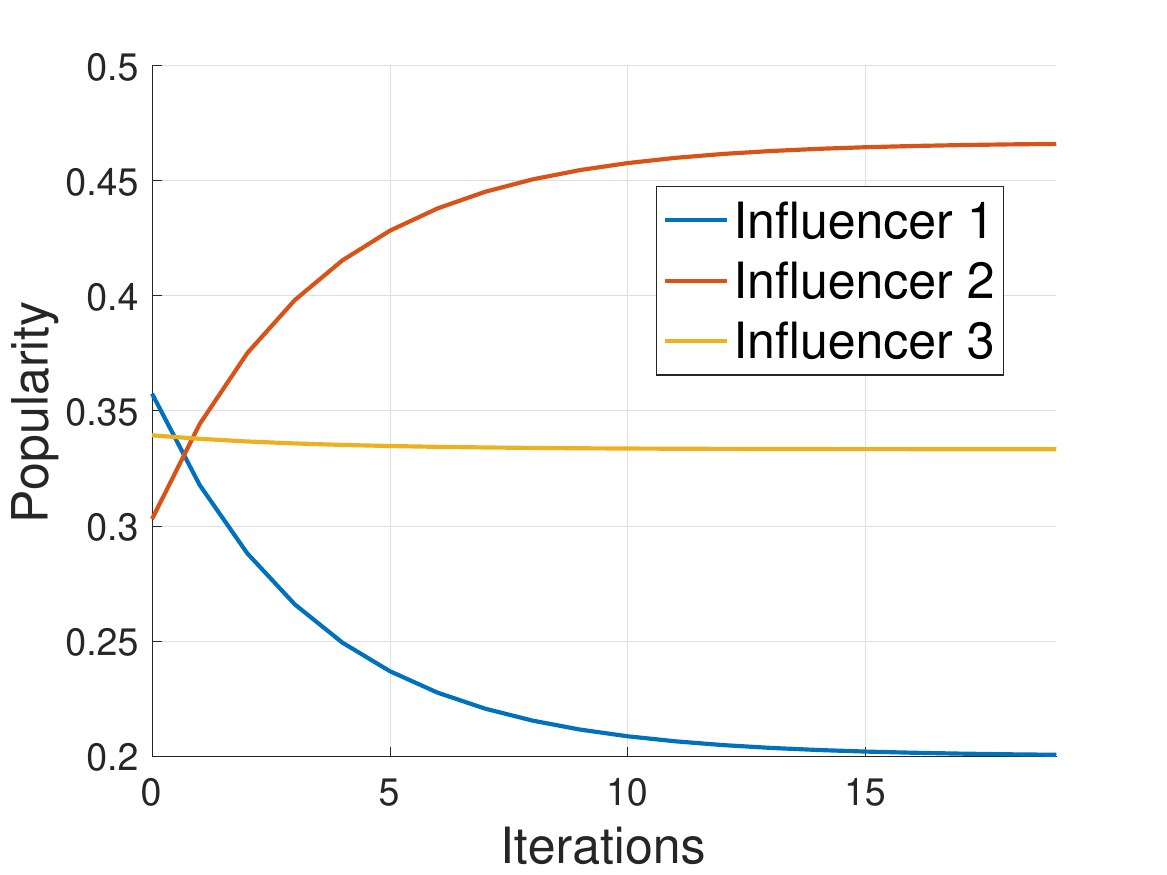}        \includegraphics[width=0.49\columnwidth,trim=10 0 10 0,clip]{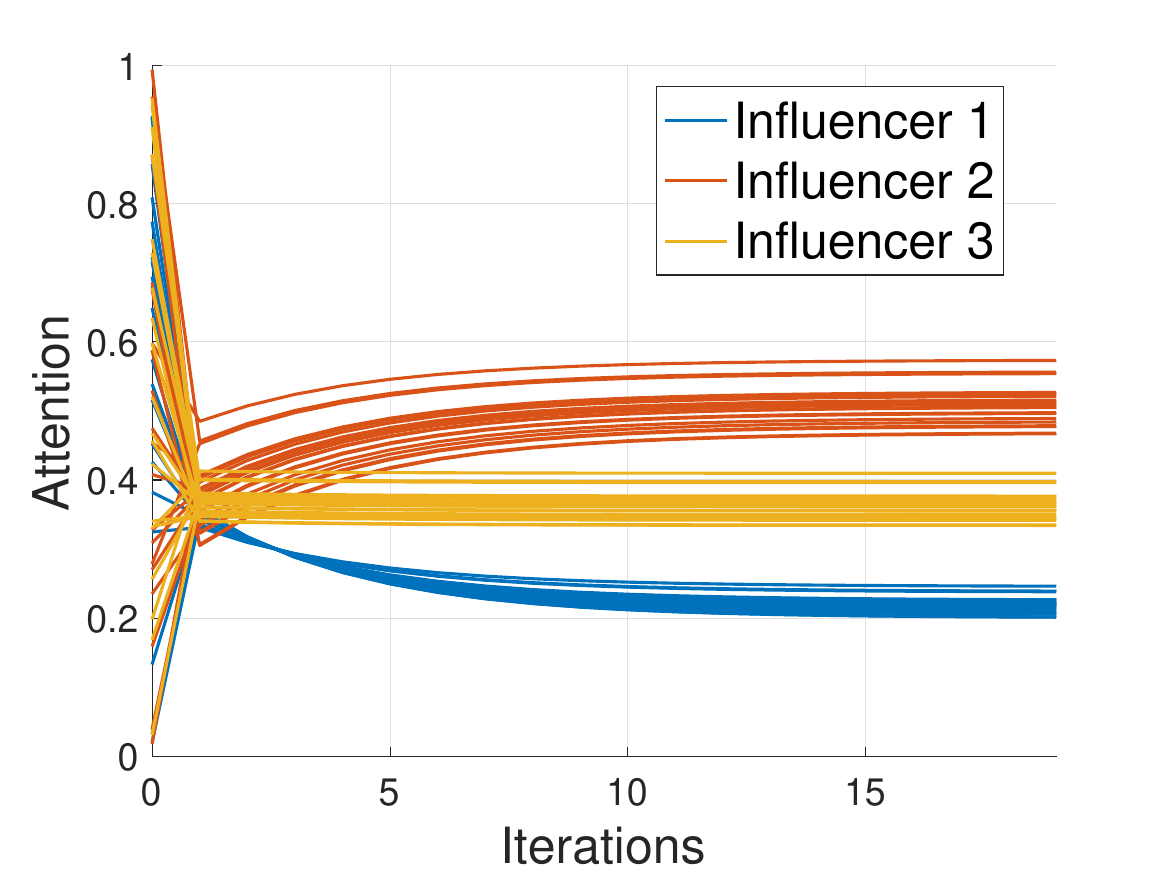}    
\caption{Evolution of dynamics~\eqref{equation:alpha_equal_0}: Popularity $\pi^{(i)}(t)$ (left) and attention $x_v^{(i)}(t)$ (right).} 
    \label{fig:alpha}
\end{figure}

\subsection{Asymptotic behavior without quality effects} \label{section: gamma0}
We now consider the case in which the quality gives no contribution, i.e. when $\gamma_v=0 \ \forall v \ \in \mathcal{V}
$.
Since $\beta_v=1-\alpha_v$, the dynamics in \eqref{eq:mod_iniziale_det} becomes 
  \begin{equation}
\label{equation:gamma0}
    x^{(i)}(t+1) = APx^{(i)}(t) + (I - A) \pi^{(i)}(t),
\end{equation}
where $A=\mathrm{diag}(\alpha)$.
Let us define $z(t)=\sum_{i\in\mathcal{I}} x^{(i)}(t)$ and use it to rewrite $\pi^{(i)}(t)$ in terms of $z(t)$ as
\begin{equation} \label{equation: pi}
\pi^{(i)}(t)=\frac{\sum_{v\in\mathcal{V}} x_v^{(i)}(t)}{\sum_{w \in \mathcal{V}}\sum_{j \in \mathcal{I}}x_w^{(j)}(t)} = \frac{\mathds{1}_n^{\top}x^{(i)}(t)}{\mathds{1}_n^{\top}z(t)}.
\end{equation}

\begin{proposition} \label{prop:zeta}
Let $\gamma_v=0$ for all $v \in \mathcal{V}$ and assume that, in the graph associated to $P$, there exists for all $v \in \mathcal{V}$ a path from $v$ to $w$ with $\alpha_w <1$. Then, $z(t)$ converges to $z^{\star} = \mathds{1}_n$.

\end{proposition}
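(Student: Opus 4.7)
The plan is to decouple the scalar popularity terms from the dynamics of the aggregate vector $z(t)$. Summing \eqref{equation:gamma0} over $i \in \mathcal{I}$ and using the identity $\sum_{i\in\mathcal{I}} \pi^{(i)}(t) = 1$ built into the definition \eqref{equation: pi} of $\pi^{(i)}$, together with $\beta_v = 1 - \alpha_v$ (available because $\gamma_v = 0$), yields the autonomous affine recursion
\begin{equation*}
z(t+1) = APz(t) + (I-A)\mathds{1}_n.
\end{equation*}
The coupling through $\pi^{(i)}$ disappears in the aggregate, which is the essential simplification enabling the analysis.

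I would then identify $\mathds{1}_n$ as the natural equilibrium candidate: row-stochasticity of $P$ gives $AP\mathds{1}_n + (I-A)\mathds{1}_n = A\mathds{1}_n + \mathds{1}_n - A\mathds{1}_n = \mathds{1}_n$. Setting $e(t) := z(t) - \mathds{1}_n$ reduces the claim to the asymptotic stability of the linear homogeneous system $e(t+1) = APe(t)$, i.e.\ to $\rho(AP) < 1$.

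The main obstacle is this spectral bound. My approach is probabilistic: view $AP$ as the sub-stochastic transition kernel of a killed Markov chain on $\mathcal{V}$ that jumps from $v$ to $w$ with probability $\alpha_v P_{vw}$ and is sent to a cemetery state with probability $1-\alpha_v$. The hypothesis---that every vertex $v$ reaches, in the graph of $P$, some $w$ with $\alpha_w < 1$---ensures a positive absorption probability within at most $n$ steps from any starting state: walking along such a simple path with positive $P$-transitions, either one first encounters an intermediate vertex with $\alpha<1$ (yielding positive killing probability at the next step) or one reaches the terminal vertex $w$ itself, which is killed with positive probability $1-\alpha_w$. Minimising these finitely many positive probabilities produces a uniform $\delta>0$ with $((AP)^n\mathds{1}_n)_v \leq 1-\delta$ for every $v$, so by the Markov property $((AP)^{kn}\mathds{1}_n)_v \leq (1-\delta)^k \to 0$; nonnegativity of $AP$ then forces $(AP)^t \to 0$ entrywise. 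Hence $e(t) = (AP)^t e(0) \to 0$ and $z(t) \to \mathds{1}_n$, as claimed.
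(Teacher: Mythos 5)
Your proof is correct and follows essentially the same route as the paper: sum over $i$ to obtain the autonomous affine recursion $z(t+1)=APz(t)+(I-A)\mathds{1}_n$, identify $\mathds{1}_n$ as the fixed point via row-stochasticity of $P$, and conclude from Schur stability of $AP$ under the reachability hypothesis. The only difference is that where the paper invokes Lemma~5 of \cite{FRASCA2013212} for the Schur stability of the substochastic matrix $AP$, you prove that step inline with a correct killed-random-walk argument giving $\|(AP)^n\|_\infty\le 1-\delta$.
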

\begin{proof}
  From equation \eqref{equation:gamma0}, recalling that $\sum_{i \in \mathcal{I}} \pi^{(i)}(t) = 1$ for all $t$ we obtain,
$
z(t+1)=APz(t)+(I-A)\mathds{1}_n.
$
This is a specific instance of a Friedkin-Johnsen model~\cite{FriedJon}, in which the constant input is  $\mathds{1}_n$. {\color{black} Then, the reachability of the deficient node $v$ implies, by \cite[Lemma~5]{FRASCA2013212}, that $AP$ is Schur stable  
and therefore $
  z^\star = (I-AP)^{-1}(I-A)\mathds{1}_n=\mathds{1}_n,
  $ where the last equality follows immediately from $P\mathds{1}_n=\mathds{1}_n.$}
\end{proof}

It is convenient to treat the evolution of $x^{(i)}(t)$ and of $\pi^{(i)}(t)$ jointly. 
By defining  
$s^{(i)}(t) := (x^{(i)}(t)^{\top} , \pi^{(i)}(t))^{\top} ,
$ the joint model becomes
\begin{equation} \label{definition:model_s}
    s^{(i)}(t+1)=U(t)s^{(i)}(t),
\end{equation}
with
\begin{equation} \label{definition: U}
U(t) = \begin{bmatrix} 
AP & (I-A)\mathds{1}_n \\ 
\frac{\mathds{1}_n^TAP}{\mathds{1}_n^{\top}z(t)} & \frac{\mathds{1}_n^{\top}(I-A)\mathds{1}_n}{\mathds{1}_n^{\top}z(t)}
\end{bmatrix}.
\end{equation}
Notice that $\mathds{1}_n^{\top}z(t)$ converges to $n$ by Proposition~\ref{prop:zeta} and therefore $U(t)$ converges to the row stochastic matrix 
\begin{equation}\label{eq:first-Utilde}
    \widetilde{U} = \begin{bmatrix} 
AP & (I-A)\mathds{1}_n \\ 
\frac{\mathds{1}_n^{\top} AP}{n} & \frac{\mathds{1}_n^{\top} (I-A)\mathds{1}_n}{n} 
\end{bmatrix}.
\end{equation}
We therefore have the following result, whose proof is postponed to Appendix \ref{app:A}.
\begin{theorem}[Convergence -- No quality]
    \label{proposition:convergence_s}
    Let $\gamma_v=0$ for all $v \in \mathcal{V}$ and {\color{black}{$z(0)\geq\1$}}. Assume that in the graph associated to $P$, for all $v \in \mathcal{V}$, there exists a path from $v$ to an {\color{black}{aperiodic node $w$}} with $\alpha_w <1$. Then, for all $i \in \mathcal{I}$ the dynamics \eqref{definition:model_s} converges to 
    $
    s^{(i)\star}= \phi^{\top} s^{(i)}(0) \, \mathds{1}_{n+1}$ as $t \rightarrow \infty
    $  
    for some vector $\phi\in \mathbb{R}^{n+1}$.
Moreover, let $\lambda_1=\rho(AP)$ and let $\tilde\phi$ be the stationary distribution of $\widetilde{U}$, then there exists a constant $\chi>0$
   {\color{black} $$
    \|{\phi}-\tilde{\phi}\|_1\leq \chi\|z(0)-\mathds{1}\|_1\frac{\lambda_1\mathfrak{p}_n(\lambda_1)}{(1-\lambda_1)^{n+1}}.
$$
where $\mathfrak{p}_n(\lambda_1)$ is a polynomial of degree $n$ and $\mathfrak{p}_n(0)=1$.}
\end{theorem}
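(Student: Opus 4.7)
The overall plan is to split the argument into three coupled pieces: (a) quantify the rate at which $U(t)\to\widetilde U$; (b) show that $\widetilde U$ is primitive, hence $\widetilde U^{t}\to\mathds{1}_{n+1}\tilde\phi^{\top}$ geometrically; (c) combine (a) and (b) through a discrete Duhamel/telescoping identity to obtain both convergence of the time-varying product $\prod_{t=0}^{T-1}U(t)$ to a rank-one matrix $\mathds{1}_{n+1}\phi^{\top}$ and a quantitative bound on $\|\phi-\tilde\phi\|_1$. For (a), I refine the proof of Proposition~\ref{prop:zeta}: combining $z(t+1)=APz(t)+(I-A)\mathds{1}_n$ with the identity $(AP+I-A)\mathds{1}_n=\mathds{1}_n$ yields the closed form $z(t)-\mathds{1}_n=(AP)^{t}(z(0)-\mathds{1}_n)$. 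Since $AP\geq 0$ and $z(0)\geq\mathds{1}_n$, the deviation is nonnegative, so $\mathds{1}^{\top}z(t)\geq n$ and the denominator in \eqref{definition: U} stays bounded below. As only the last row of $U(t)$ differs from $\widetilde U$, one obtains an estimate of the form
\begin{equation*}
\|U(t)-\widetilde U\|_{1}\;\leq\; C\,\|z(t)-\mathds{1}_n\|_{1}\;\leq\;C'\,\lambda_1^{t}\,\mathfrak{p}_n(\lambda_1)\,\|z(0)-\mathds{1}_n\|_{1},
\end{equation*}
where the last step is a Jordan-form bound on $(AP)^{t}$.

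For (b), the reachability assumption implies that the digraph of $\widetilde U$ is strongly connected: from any user $v$ one follows the $AP$-subgraph to the aperiodic node $w$ with $\alpha_w<1$, then jumps to $n+1$ through the edge of weight $1-\alpha_w>0$; conversely, the last row of $\widetilde U$ spreads mass back to each user. Aperiodicity is ensured by the positive self-loop at $n+1$ of weight $\mathds{1}^{\top}(I-A)\mathds{1}_n/n$ combined with the aperiodic $w$. Perron--Frobenius then gives $\widetilde U^{k}=\mathds{1}_{n+1}\tilde\phi^{\top}+R_k$ with $\|R_k\|$ decaying geometrically in the spectral gap of $\widetilde U$.

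For (c), the backbone is the telescoping identity
\begin{equation*}
\prod_{t=0}^{T-1}U(t)\;-\;\widetilde U^{T}\;=\;\sum_{k=0}^{T-1}\Bigl(\prod_{t=k+1}^{T-1}U(t)\Bigr)\bigl(U(k)-\widetilde U\bigr)\widetilde U^{k}.
\end{equation*}
Convergence of the left-hand side to a rank-one matrix $\mathds{1}_{n+1}\phi^{\top}$ follows from a Hajnal/Wolfowitz contraction argument on the tail, since once $t$ is large enough $U(t)$ inherits the Doeblin coefficient of the primitive limit $\widetilde U$. Letting $T\to\infty$, the left-hand side becomes $\mathds{1}_{n+1}(\phi-\tilde\phi)^{\top}$, and on the right the leading factor is row-stochastic (norm $1$), the perturbation is bounded by the estimate from (a), and $\widetilde U^{k}$ contributes only through its geometrically decaying remainder $R_k$ because $(U(k)-\widetilde U)\mathds{1}_{n+1}=0$. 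Summing the resulting geometric series yields a bound of the shape $\|\phi-\tilde\phi\|_1\leq \chi\,\|z(0)-\mathds{1}\|_1\,\lambda_1\,\mathfrak{p}_n(\lambda_1)/(1-\lambda_1)^{n+1}$. The main obstacle is producing the precise polynomial factor: when $AP$ has Jordan blocks of size up to $n$, the tail sums $\sum_{k\geq0}k^{j}\lambda_1^{k}$ give rise to factors $(1-\lambda_1)^{-(j+1)}$ for $j=0,\dots,n$; combining them via Cayley--Hamilton (equivalently the Leverrier--Faddeev form of the resolvent $(I-AP)^{-1}$) produces the claimed polynomial $\mathfrak{p}_n$ with $\mathfrak{p}_n(0)=1$.
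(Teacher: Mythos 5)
Your overall architecture coincides with the paper's: write $U(t)=\widetilde U+\Delta U(t)$, use the closed form $z(t)-\mathds{1}_n=(AP)^t(z(0)-\mathds{1}_n)$ together with a Jordan-form bound to get $\|\Delta U(t)\|\leq \chi\|z(0)-\mathds{1}_n\|_1\, t^n\lambda_1^t$ (note your displayed estimate misplaces the polynomial: the per-step bound carries a factor $t^n$ in $t$, and $\mathfrak{p}_n(\lambda_1)/(1-\lambda_1)^{n+1}$ only appears after summing over $t$), deduce convergence of the time-varying product to a rank-one matrix, and control $\|\phi-\tilde\phi\|_1$ by a telescoping identity plus $\sum_k k^n\lambda^k=\lambda\,\mathfrak{p}(\lambda)/(1-\lambda)^{n+1}$. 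The one structural substitution is that the paper gets convergence of the product from Stewart's perturbation theorem for $\prod_\ell(Q+\Delta Q(\ell))$ with $\sum_\ell\|\Delta Q(\ell)\|<\infty$, whereas you invoke a Hajnal/Wolfowitz contraction on the tail; that can be made to work, but you leave it as a sketch, and the matrices $U(t)$ are only sub-stochastic (the last row sums to $n/\mathds{1}_n^{\top}z(t)\leq 1$), so the coefficient-of-ergodicity argument still leans on the summable perturbation.

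Two of your intermediate claims are false as stated, though neither is fatal to the bound. First, $(U(k)-\widetilde U)\mathds{1}_{n+1}\neq 0$: the nonzero (last) row of $\Delta U(k)$ sums to $n\bigl(\tfrac{1}{\mathds{1}_n^{\top}z(k)}-\tfrac{1}{n}\bigr)$, which vanishes only in the limit. Hence you cannot discard the rank-one part of $\widetilde U^k$ and retain only the remainder $R_k$ in the telescoped sum; you must instead bound $\|\widetilde U^k\|_{\infty}=1$ and $\prod_s\|U(s)\|_{\infty}=1$ (this is where $z(0)\geq\mathds{1}$ enters, as in the paper's Lemma on $\|U(t)\|_\infty$), after which $\sum_k k^n\lambda_1^k$ still yields the claimed bound. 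Second, $\widetilde U$ need not be strongly connected under the stated hypotheses: a user $v$ with no in-neighbor $u$ in $P$ having $\alpha_u>0$ receives no edge from any node of $\widetilde U$, including node $n+1$, whose outgoing weight to $v$ is $(\mathds{1}_n^{\top}AP)_v/n$. What the hypotheses do guarantee is that node $n+1$ is reachable from every node and has a positive self-loop, so $\widetilde U$ has a unique closed communicating class which is aperiodic; this weaker property is what actually delivers $\widetilde U^t\to\mathds{1}_{n+1}\tilde\phi^{\top}$ and the simple dominant eigenvalue $1$ needed for the convergence argument.
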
 

It can be seen that in this case, the dynamics for each influencer $i$ converges to a consensus, since the role of each user is the same: in other words, all users converge to the same value of attention for the $i$-th influencer, which is equal to its popularity. Although $\phi$ is the same for all $i$, the consensus value is different depending on the influencer $i$, since it depends on the initial condition $s^{(i)}(0)$.  
{\color{black}{Moreover, for fixed $n$, when the influence of the matrix $AP$ is weak, $\lambda_1$ will be small and $\phi$ will approach the reference stationary distribution $\tilde{\phi}$.
On the other hand, when $\lambda_1$ approaches 1, the bound will be less informative, but we expect that $\phi$ will be close to the invariant measure of $P$.}}

Figure~\ref{figure:gamma0} shows an example. The network is modeled as an Erdős-Rényi random graph with $n=20$ nodes and edge probability $ p = 0.2 $. The parameters $ \beta_v $ and initial conditions $x_v^{(i)}(0)$ are sampled from a uniform distribution in $[0,1]$. 
It can be observed that, for any influencer $i$, the attention values by all users $x_v^{(i)}(t)$ go to a consensus, as deduced in theory. The consensus value is the limit popularity of $i$, $\lim_{t\rightarrow \infty}\pi^{(i)}(t)$.
Simulations also confirm that $\phi$ is close to $\tilde{\phi}$:
the dashed lines in the left plot of Figure~\ref{figure:gamma0} show the consensus values that would be produced by using $\tilde{\phi}$ instead of $\phi$.
\begin{figure}
    \centering    \includegraphics[width=0.49\columnwidth,trim=10 0 10 0]{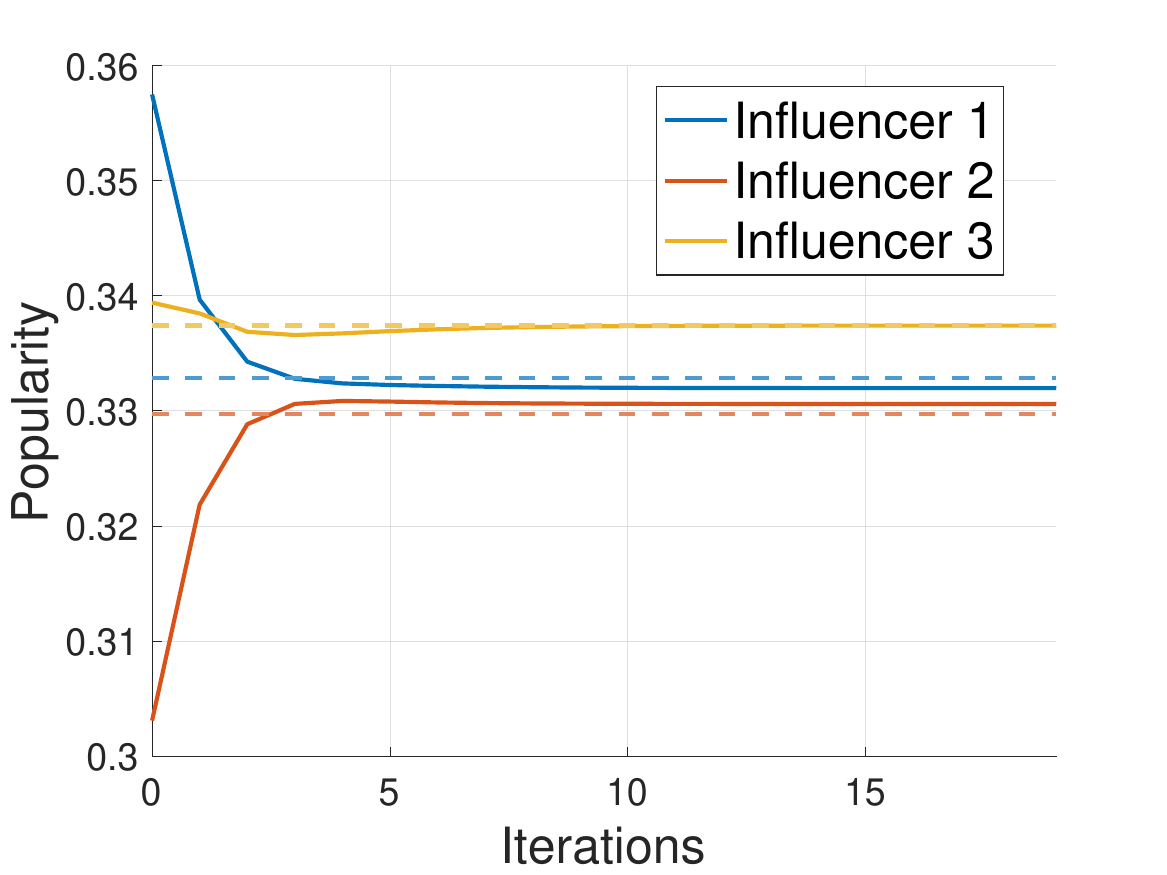} 
\includegraphics[width=0.49\columnwidth,trim=10 0 10 0]{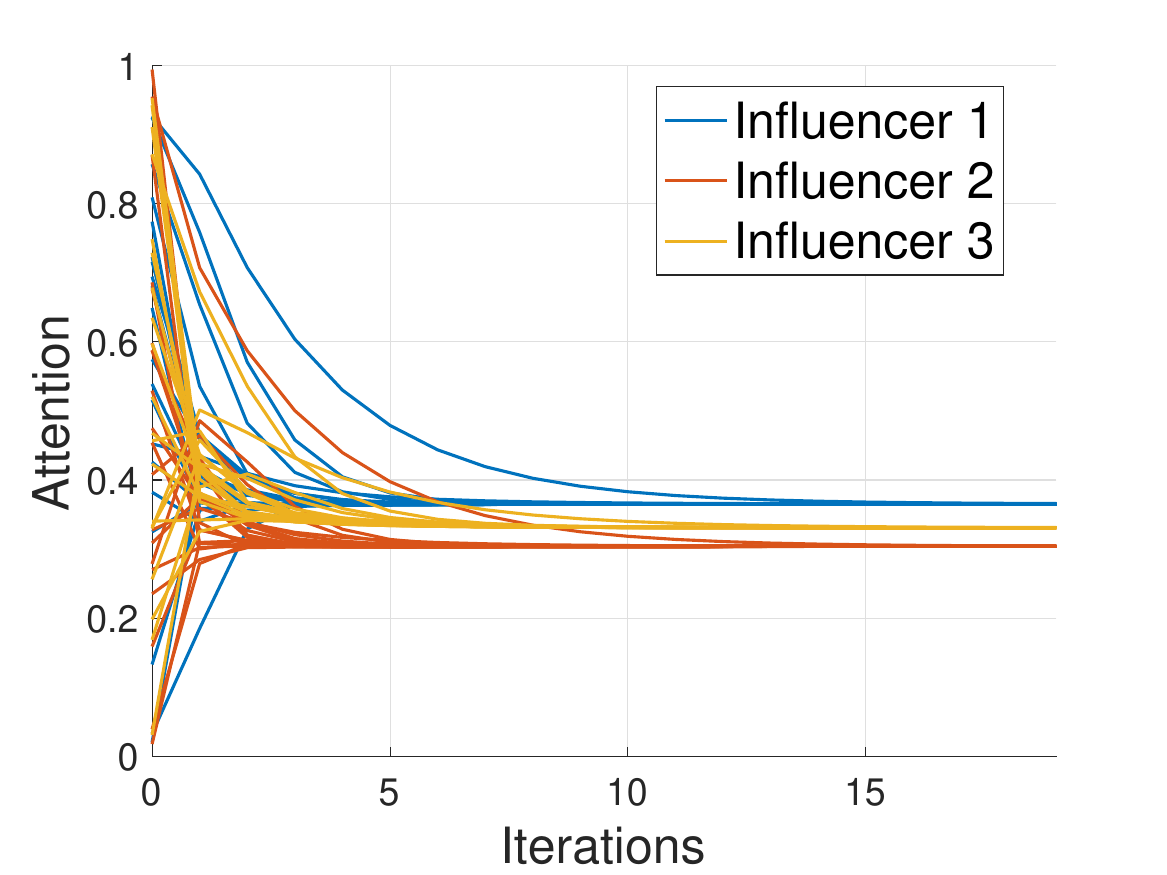}      \caption{Evolution of dynamics~\eqref{equation:gamma0}: Popularity $\pi^{(i)}(t)$ (left) and attention $x_v^{(i)}(t)$ (right).} 
    \label{figure:gamma0}
\end{figure}

\subsection{Asymptotic behavior with all effects} \label{section:tuttitermini}
It remains to understand how the general dynamics, in which all terms appear, behave.
We define 
$s^{(i)}(t) := (x^{(i)}(t)^{\top} , \pi^{(i)}(t))^{\top} ,
$ and express the coupled dynamics
\begin{equation} \label{sist_completo}
s^{(i)}(t+1)=U(t)s^{(i)}(t)+{q}^{(i)}c(t)
\end{equation}
where 
{\begin{equation}\label{expressions:Uc}
U(t) = \begin{bmatrix} 
AP & B\mathds{1}_n \\ 
\frac{\mathds{1}_n^{\top}AP}{\mathds{1}_n^{\top}z(t)} & \frac{\mathds{1}_n^{\top}B\mathds{1}_n}{\mathds{1}_n^{\top}z(t)}
\end{bmatrix}, \quad 
c(t) = \begin{bmatrix} 
(I-A-B)\mathds{1}_n \\ 
\frac{\mathds{1}_n^{\top}(I-A-B)\mathds{1}_n}{\mathds{1}_n^{\top}z(t)} 
\end{bmatrix}.
\end{equation}

Notice that also in this case we have the convergence of $z$.
\begin{proposition} \label{prop:zeta2}
Assume that for all $v\in \mathcal{V}$ there exists a path from $v$ to $w$ such that $\gamma_w>0$ in the graph associated to $P$. Then $z(t)$ converges and
\[
z^{\star}:=\lim_{t\rightarrow\infty}z(t)= (I-AP)^{-1}(B +q_{\mathrm{tot}}(I-A-B))\mathds{1}_n.
\]
\end{proposition}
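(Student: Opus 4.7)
The plan is to sum the vector form of \eqref{eq:mod_iniziale_det} over $i\in\mathcal{I}$ and reduce the statement to an affine recursion of Friedkin--Johnsen type, then invoke the same stability lemma already used in the proof of Proposition~\ref{prop:zeta}.

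First I would sum the update rule $x^{(i)}(t+1)=APx^{(i)}(t)+B\mathds{1}_n\pi^{(i)}(t)+\Gamma\mathds{1}_n q^{(i)}$ (using that $B\pi^{(i)}(t)$ in \eqref{eq:mod_iniziale_det} can be written as $B\mathds{1}_n\pi^{(i)}(t)$ since $\pi^{(i)}(t)$ is a scalar). Using $\sum_{i\in\mathcal{I}}\pi^{(i)}(t)=1$ and $\sum_{i\in\mathcal{I}}q^{(i)}=q_{\mathrm{tot}}$, together with the identity $\Gamma\mathds{1}_n=(I-A-B)\mathds{1}_n$ coming from $\alpha_v+\beta_v+\gamma_v=1$, this gives the scalar-free affine recursion
\[
z(t+1)=APz(t)+\bigl(B+q_{\mathrm{tot}}(I-A-B)\bigr)\mathds{1}_n,
\]
whose right-hand side does not depend on $t$.

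Next I would argue that $AP$ is Schur stable. The hypothesis provides, for every $v\in\mathcal{V}$, a path in the graph of $P$ to some $w$ with $\gamma_w>0$; since $\alpha_w+\beta_w+\gamma_w=1$ and $\gamma_w>0$ implies $\alpha_w<1$, the reachability assumption of \cite[Lemma~5]{FRASCA2013212} is satisfied, so $\rho(AP)<1$ and $(I-AP)$ is invertible.

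From Schur stability, the affine recursion converges for every initial condition to its unique fixed point
\[
z^\star=(I-AP)^{-1}\bigl(B+q_{\mathrm{tot}}(I-A-B)\bigr)\mathds{1}_n,
\]
which is exactly the expression in the statement. There is no real obstacle here: the only point to notice is the bookkeeping identity $\gamma=(I-A-B)\mathds{1}_n$ that lets the constant term be written in the form appearing in \eqref{expressions:Uc}; everything else reuses the argument already given for Proposition~\ref{prop:zeta}.
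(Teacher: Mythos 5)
Your proof is correct and follows essentially the same route as the paper: sum the update over $i\in\mathcal{I}$ to obtain the affine recursion $z(t+1)=APz(t)+(B+q_{\mathrm{tot}}(I-A-B))\mathds{1}_n$, then deduce Schur stability of $AP$ from the connectivity hypothesis via Lemma~5 of \cite{FRASCA2013212} and read off the fixed point. The only difference is that you spell out the bookkeeping (in particular that $\gamma_w>0$ forces $\alpha_w<1$) that the paper leaves implicit.
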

\begin{proof} 
Notice that in this case \[
z(t+1)=AP z(t)+ B \mathds{1}_n+q_{\mathrm{tot}}(I-A-B) \mathds{1}_n.
\]
As in the proof of Proposition~\ref{prop:zeta}, the connectivity assumption guarantees that $AP$ is Schur stable by Lemma 5 in \cite{FRASCA2013212}.
\end{proof}

The following theorem guarantees asymptotic convergence of $s$: its somewhat lengthier proof is given in Appendix~\ref{app:C}.
\begin{theorem}[Convergence -- Generic case]\label{theorem:tutti_termini}
    Assume that, in the graph associated to $P$, for all $v\in \mathcal{V}$ there exists a path from $v$ to {\color{black}{an aperiodic node $w$}} such that $\gamma_w>0$. 
        If $q_{\mathrm{tot}} \geq 1$ and $z(0)\geq \1_n$, then 
    the dynamics \eqref{sist_completo} 
    is convergent for $t\rightarrow \infty$  and  for all $i\in\mathcal{I}$, 
    $
\lim_{t\rightarrow\infty}s^{(i)}(t)=q^{(i)} \,(I-\widetilde{U})^{-1}\widetilde{c} ,
    $
    with
{\small{  \begin{equation}\label{eq:expressions_tilde}
\widetilde{U}=\begin{bmatrix} 
AP & B\mathds{1}_n \\ 
\frac{\mathds{1}_n^{\top} AP}{\mathds{1}_n^{\top}z^\star} & \frac{\mathds{1}_n^{\top} B\mathds{1}_n}{\mathds{1}_n^{\top}z^\star} 
\end{bmatrix}, \quad \widetilde{c}=\begin{bmatrix} (I-A-B)\mathds{1}_n \\ \frac{\mathds{1}_n^{\top}(I-A-B)\mathds{1}_n}{\mathds{1}_n^{\top}z^\star}  \end{bmatrix}. 
\end{equation}}}
    \end{theorem}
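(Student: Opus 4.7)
The strategy is to view \eqref{sist_completo} as an asymptotically time-invariant linear iteration, exploit Proposition~\ref{prop:zeta2} to obtain $U(t)\to\widetilde{U}$ and $c(t)\to\widetilde{c}$, and then invoke the standard fact that an iteration $s(t+1)=U(t)s(t)+q^{(i)}c(t)$ with convergent coefficients and with $\rho(\widetilde{U})<1$ drives $s(t)$ to the unique fixed point $q^{(i)}(I-\widetilde{U})^{-1}\widetilde{c}$ of the limiting system. The crux is therefore to establish that $\widetilde{U}$ is Schur stable.

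A preliminary step is to check that the iteration is well-posed and that $\mathds{1}_n^\top z(t)$ is bounded away from zero. Since $z(t+1)=AP\,z(t)+B\mathds{1}_n+q_{\mathrm{tot}}(I-A-B)\mathds{1}_n$, a simple induction from $z(0)\ge\mathds{1}_n$ shows $z(t)\ge\mathds{1}_n$ for all $t$: if $z(t)\ge\mathds{1}_n$, then the $v$-th entry of $z(t+1)$ is at least $\alpha_v+\beta_v+q_{\mathrm{tot}}\gamma_v\ge\alpha_v+\beta_v+\gamma_v=1$, where the last step uses $q_{\mathrm{tot}}\ge 1$. Consequently $U(t)$ is nonnegative and row-substochastic at all times, and $U(t)\to\widetilde{U}$, $c(t)\to\widetilde{c}$ follow from Proposition~\ref{prop:zeta2} and continuity of the rational entries in $\mathds{1}_n^\top z(t)$.

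To prove $\rho(\widetilde{U})<1$ I would combine a sub-stochasticity bound with a graph reachability argument. For $v\le n$, the $v$-th row of $\widetilde{U}$ sums to $\alpha_v+\beta_v=1-\gamma_v$, and the last row sums to $(n-\sum_v\gamma_v)/(\mathds{1}_n^\top z^\star)$, which is bounded above by $n/(\mathds{1}_n^\top z^\star)\le 1$ and is strictly smaller as soon as some $\gamma_w>0$. Hence $\widetilde{U}$ is a nonnegative sub-stochastic matrix whose row sum is strictly less than one on row $n+1$ and on every row $v$ with $\gamma_v>0$. Standard spectral results for such matrices give $\rho(\widetilde{U})<1$ once one shows that, in the directed graph of $\widetilde{U}$, every node can reach a strictly-deficient row. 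The only delicate case is $v$ with $\alpha_v=1$, for which $\beta_v=0$ and $v$ has no direct edge to row $n+1$; for such a $v$ I follow the $P$-path to an aperiodic $w$ with $\gamma_w>0$ provided by the hypothesis. While the path visits nodes $v_i$ with $\alpha_{v_i}>0$, its edges lift to edges of $\widetilde{U}$; the first time it meets a node $v_i$ with $\alpha_{v_i}<1$, either $\gamma_{v_i}>0$ (and $v_i$ is itself deficient) or $\beta_{v_i}>0$ (and the edge $v_i\to n+1$ carries the walk to the deficient popularity node).

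Once Schur stability is in hand, convergence of $s^{(i)}(t)$ follows by a textbook perturbation argument: letting $\Delta(t)=s^{(i)}(t)-q^{(i)}(I-\widetilde{U})^{-1}\widetilde{c}$, one has $\Delta(t+1)=\widetilde{U}\Delta(t)+(U(t)-\widetilde{U})s^{(i)}(t)+q^{(i)}(c(t)-\widetilde{c})$, and the boundedness of $s^{(i)}(t)$ (from sub-stochasticity of $U(t)$ and boundedness of $c(t)$) together with $U(t)\to\widetilde{U}$, $c(t)\to\widetilde{c}$, and $\rho(\widetilde{U})<1$ force $\Delta(t)\to 0$. I expect the main obstacle to be the Schur stability argument, since it requires a careful case analysis of how the $AP$ block couples to the popularity row through the $\beta$-edges; the remaining ingredients (well-posedness, convergence of $U(t)$ and $c(t)$, and the perturbation step) are comparatively routine.
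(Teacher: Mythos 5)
Your proposal is correct and rests on the same structural pillars as the paper's proof --- the induction $z(t)\ge\mathds{1}_n$ from $q_{\mathrm{tot}}\ge1$ and $z(0)\ge\mathds{1}_n$, the convergence $U(t)\to\widetilde U$ and $c(t)\to\widetilde c$ via Proposition~\ref{prop:zeta2}, and the Schur stability of $\widetilde U$ obtained by showing every node of its graph reaches a row with deficient sum (your case analysis for nodes with $\alpha_v=1$ versus $\alpha_v<1$, lifting the $P$-path until it either hits a node with $\gamma>0$ or jumps to the popularity row through a $\beta$-edge, is exactly the content of Lemma~5 of \cite{FRASCA2013212}, which the paper invokes instead of reproving). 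Where you genuinely diverge is the final step. The paper unrolls the recursion into the variation-of-constants form \eqref{eq: sist con i termini}, decomposes $U(t)=\widetilde U+\Delta U(t)$, $c(t)=\widetilde c+\Delta c(t)$ with the geometric bounds $\|\Delta U(k)\|,\|\Delta c(k)\|\lesssim\|(AP)^k\|$, and then grinds through Jordan-form estimates (Lemmas~\ref{lemma:norm_inf} and~\ref{lemma:geometric_series}) and a two-case comparison of $\rho(AP)$ against $\rho(\widetilde U)$; this buys explicit convergence rates of the form $t^{n+1}\max\{\rho(AP),\rho(\widetilde U)\}^t$. You instead run a soft ``asymptotically autonomous linear system'' argument on the error $\Delta(t)=s^{(i)}(t)-q^{(i)}(I-\widetilde U)^{-1}\widetilde c$, which is considerably shorter and suffices for the convergence claim actually stated in the theorem, but yields no rate.

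One step in your final paragraph is stated incorrectly, though it is easily repaired: boundedness of $s^{(i)}(t)$ does \emph{not} follow from ``sub-stochasticity of $U(t)$ and boundedness of $c(t)$'' alone. Since rows $v$ with $\gamma_v=0$ of $U(t)$ sum exactly to $1$, one only has $\|U(t)\|_\infty\le1$, and the naive estimate $\|s(t+1)\|\le\|s(t)\|+q^{(i)}\|c(t)\|$ gives linear growth, not a uniform bound. To close this, use the Schur stability you have already established: pick a norm $\|\cdot\|_*$ with $\|\widetilde U\|_*\le\rho(\widetilde U)+\delta<1$ and a time $T$ after which $\|U(t)-\widetilde U\|_*$ is small enough that $\|U(t)\|_*<1-\eta$; from $t\ge T$ onward the iteration is a uniform contraction plus a bounded input, which gives boundedness and, combined with $U(t)\to\widetilde U$ and $c(t)\to\widetilde c$, the convergence $\Delta(t)\to0$ in one stroke. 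With that repair the argument is complete.
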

 
\smallskip This convergence result requires that $q_{\mathrm{tot}} \geq 1$ and $z_v(0)\ge 1$ for all $v \in \mathcal{V}$ {\color{black}(The latter condition also appears in Theorem~\ref{proposition:convergence_s})}. We argue that these conditions are mild and, in fact, not necessary for convergence.
Indeed, they require both the total quality of the influencers and each user's total attention for the influencers to be at least 1. This assumption is not restrictive as long as the number of competing influencers is not too small.
Furthermore, simulations show that the dynamics converge also when these conditions are not met. 
An example is provided in Figure~\ref{fig:casoparticolare}, where the network is the same as in Figure~\ref{figure:gamma0}. 
The parameters $\alpha_v, \beta_v,\gamma_v $ and $x_v^{(i)}(0)$ are sampled from a uniform distribution in $[0,1]$, then normalized to satisfy $\alpha_v+\beta_v+\gamma_v=1$ for all $v\in\mathcal{V}$.
\begin{figure}
\includegraphics[width=0.49\columnwidth,trim=10 0 10 0]{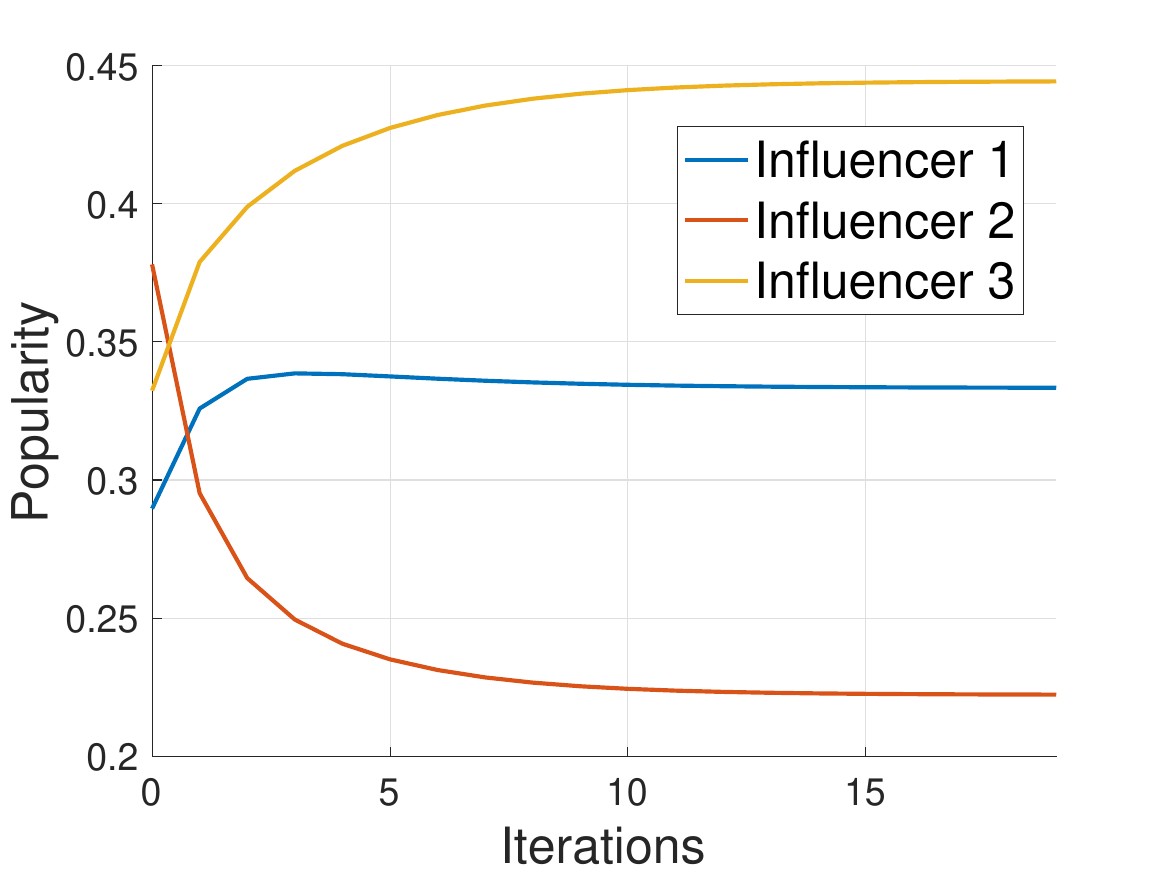}
\includegraphics[width=0.49\columnwidth,trim=10 0 10 0]{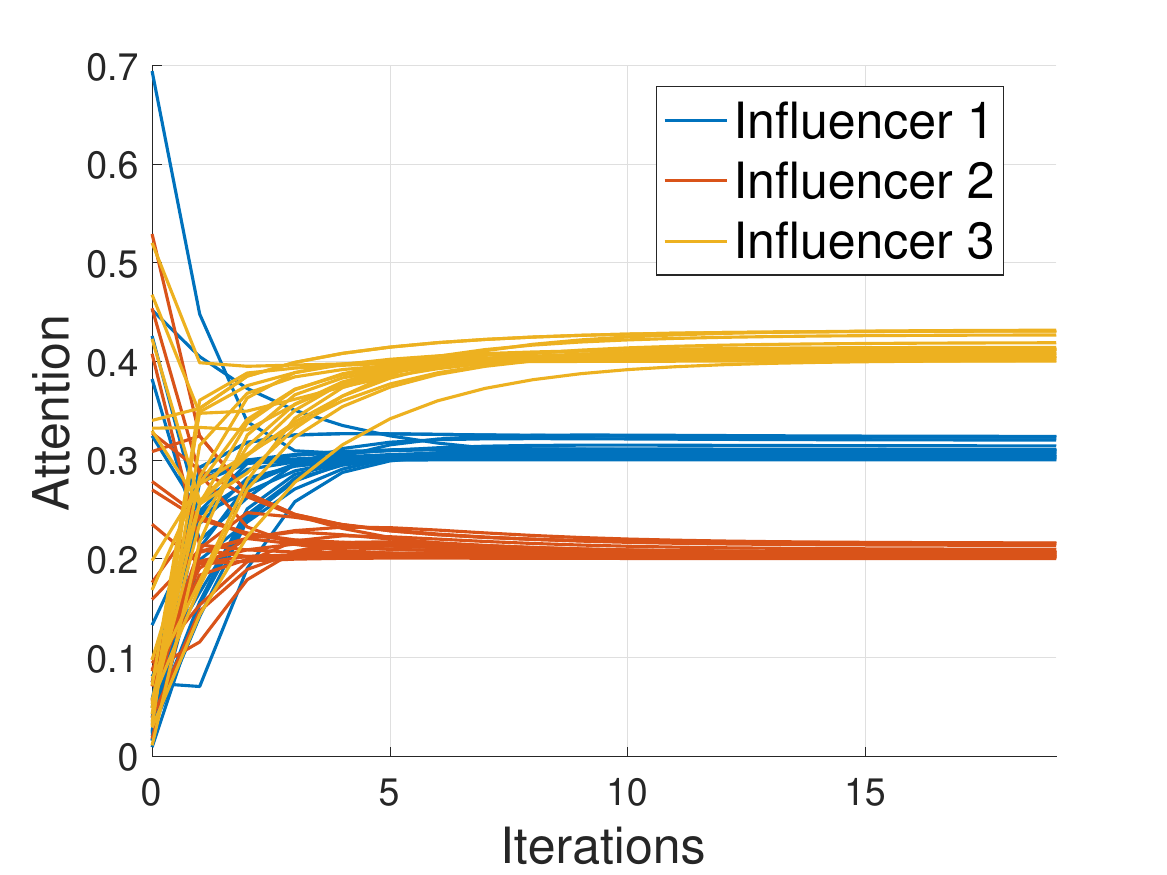}
  \caption{Evolution of popularity $\pi^{(i)}(t)$ (left) and attention $x_v^{(i)}(t)$ (right) in the general case~\eqref{sist_completo}, when $\alpha_v\neq0, \ \beta_v\neq0, \ \gamma_v\neq0$ for all $v$. Notice that condition $z_v(0)\ge 1$ in Theorem~\ref{theorem:tutti_termini} is not satisfied in this example. 
  }
        \label{fig:casoparticolare}
\end{figure}

\section{Discussion}\label{sect:discussion}
The study of the model shows that, under mild connectivity conditions, the dynamics converge to equilibria, which we are able to characterize explicitly. 

The characterizations of the equilibria allow us to deduce important insights on the relationship between social influence, recommendations, and content quality. 
Let us discuss the various cases of interest.

When the network has no role ($\alpha=0$) as in Theorem~\ref{proposition:alpha0}, popularity is determined solely by quality. 
Therefore, it appears that the network is essential for recommendations to have an effect. 
When recommendations (and thus past popularity) have no role {\color{black}($\beta=0$), the system reduces to $|\mathcal{I}|$ decoupled Friedkin-Johnsen's models with constant input $q^{(i)}\1_n$, whose equilibria read
${x^{(i)}}^*=(I-AP)^{-1}(I-A)q^{(i)}\mathds{1}_n=q^{(i)}\mathds{1}_n$
and are globally asymptotically stable, as long as the social network contains a globally reachable node. In this case, the popularity of the influencers} is trivially determined by their quality, thus making the network irrelevant. These two cases, taken together, indicate that recommendations enhance and enable the effects of the network. 
If quality has no role ($\gamma=0$) as in Theorem~\ref{proposition:convergence_s}, popularity is solely determined by the structure of the user network and by the initial condition: influencers who initially receive attention by prominent users will continue to be successful. In this scenario, the success of influencers is uniform across users (all users share the same taste) and if the alpha parameter is small, then influence becomes evenly distributed, thus fixing initial success. 
On the contrary, in the generic case with all effects, as in Theorem~\ref{theorem:tutti_termini}, quality plays a fundamental role. The steady-state vector of the attention received by an influencer is the product of its quality and an influence profile vector. Each user heeds influencers differently, but the influence profile is the same for every influencer and depends on the parameters $\gamma_v$'s and on the network structure, with no role for the initial condition. 

In conclusion, our analysis offers {\color{black}insights into the potential relationships} between quality, network effects, and popularity.
One key insight is that the network has a significant impact despite quality being a property of the influencer only, independent from the users. {\color{black}Indeed, the discussion on the cases with $\alpha=0$ and $\beta=0$ shows that popularity-based recommendations enhance the role of the social network.}
Another key insight is that {\color{black}the dynamics displays rather different regimes, depending on whether users are sensitive to quality}. If $\gamma$ is positive, quality plays a dominant role.
If $\gamma$ is zero and the initial conditions are random, their randomness plays a dominant role, making popularity more unpredictable.
This diversity of regimes may be useful to explain variations across different platforms, which can display forms of virality that are more or less predictable and more or less directly explainable by content features.

\section{\color{black} Conclusion}\label{sect:conclusion}
In this paper, we have formulated a nonlinear model that aims to explain how users' attention to influencers, and thus the popularity of the latter, evolves. {\color{black}Our original model, which generalizes the well-known linear model by Friedkin and Johnsen~\cite{FriedJon}, includes} three key factors: social influence through user interactions, popularity-based recommendations by the platform, and the influencer quality.
{\color{black} As we have discussed in the previous section, the regimes captured by our model (see Sections~III.B and III.C) are consistent with qualitative observations reported in recent studies: users of platforms such as TikTok can show a lower sensitivity to content quality and reward influencers regardless of their experience \cite{10.1145/3501247.3531551}, while for example on YouTube, a smaller number of creators tend to dominate attention thanks to the production of high-quality content \cite{CHEUNG2022102940}. Future research should seek more explicit empirical confirmations of these insights.} 

Further work would also be beneficial to overcome some limitations of our analysis (essentially, the assumption on $z_v(0)$ in Theorems~\ref{proposition:convergence_s} and~\ref{theorem:tutti_termini} {\color{black}and the connectivity assumptions on the social influence network}) and to extend the model to allow for distinct personal preferences of individuals $q^{(i)}_v$.

\bibliographystyle{IEEEtran}
\bibliography{references2.bib}

\appendix
\section{Proof of Theorem~\ref{proposition:convergence_s}}\label{app:A}
{\color{black} \begin{lemma}\label{lemma:norm_inf}
Let $ M \in \mathbb{C}^{n \times n} $ be a matrix with spectral radius $ \rho\in(0,1) $. Then 
$
\|M^k\|_{\infty} \leq C \, k^n\, \rho^k \quad \text{for all } k \in \mathbb{N},
$
where $ \chi >0$ is a constant (independent of $ k $ and $\rho$). 
\end{lemma}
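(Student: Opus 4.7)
The plan is to reduce to the Jordan canonical form of $M$ and bound each Jordan block individually. Write $M = PJP^{-1}$ with $J = \mathrm{diag}(J_1,\ldots,J_r)$, where $J_l$ is a Jordan block of size $m_l \le n$ with eigenvalue $\lambda_l$ satisfying $|\lambda_l| \le \rho$. By submultiplicativity of the infinity norm, $\|M^k\|_\infty \le \kappa(P)\,\|J^k\|_\infty$, where $\kappa(P) := \|P\|_\infty \|P^{-1}\|_\infty$ depends only on $M$ (not on $k$ or $\rho$) and can be absorbed into the final constant $\chi$. Since $J$ is block diagonal, it suffices to bound each $\|J_l^k\|_\infty$ uniformly in $l$.

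For each Jordan block, decompose $J_l = \lambda_l I_{m_l} + N$, where $N$ is the standard nilpotent super-diagonal shift satisfying $N^{m_l}=0$. Since $\lambda_l I$ and $N$ commute, the binomial theorem gives
\[
J_l^k \;=\; \sum_{j=0}^{m_l-1} \binom{k}{j}\,\lambda_l^{k-j}\,N^{j}.
\]
Using $\|N^j\|_\infty \le 1$ for $j<m_l$ and $|\lambda_l| \le \rho$, the triangle inequality yields
\[
\|J_l^k\|_\infty \;\le\; \sum_{j=0}^{m_l-1} \binom{k}{j}\,\rho^{k-j}.
\]

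The last step is to control this sum and match the stated shape $\chi\,k^n\,\rho^k$. Bounding $\binom{k}{j}\le k^{j}$ for $j \le n-1$ and observing that there are at most $n$ terms, one obtains a worst-case estimate of order $n\,k^{n-1}\,\rho^{k-n+1}$. The main obstacle I anticipate is precisely this final rearrangement: the nilpotent expansion naturally produces $\rho^{k-j}$ rather than $\rho^{k}$, so to recover the stated form one must package the leftover $\rho^{-(n-1)}$-type factor together with $\kappa(P)$ into $\chi$, treating $\chi$ as a constant that depends on the ambient dimension $n$ and on the Jordan structure of $M$ but is independent of the iteration index $k$. This is exactly what is needed downstream to sum $k^n \rho^k$ into the geometric-series bound $\lambda_1\mathfrak{p}_n(\lambda_1)/(1-\lambda_1)^{n+1}$ used in Theorem~\ref{proposition:convergence_s}.
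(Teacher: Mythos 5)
Your proposal is correct and follows essentially the same route as the paper's proof: Jordan decomposition, binomial expansion of each block $(\lambda I+N)^k$, the bound $\binom{k}{j}\le k^{j}$, and absorption of the leftover $|\lambda|^{-j}$ factors together with the conditioning of the similarity transform into the constant. The only cosmetic difference is that the paper states the entrywise bound $\binom{k}{m}|\lambda|^{k-m}$ on the Jordan block powers directly rather than deriving it from the nilpotent expansion, so there is nothing substantive to add.
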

 \begin{proof}
Let \( q $ be the number of distinct eigenvalues of the matrix \( M \in \mathbb{C}^{n \times n} $, denoted by \( \{\lambda_{\ell}\}_{\ell=1}^{q} $, and consider its Jordan canonical decomposition: $M = U J U^{-1}.$
Then,
$
\|M^k\|_{\infty} \leq \|U\|_{\infty} \|J^k\|_{\infty} \|U^{-1}\|_{\infty}.
$
The \( k \)-th power of a Jordan block of size \( s \) associated with eigenvalue \( \lambda \) has entries bounded by \( \binom{k}{m} |\lambda|^{k-m} \), for \( m = 0, \dots, s-1 \). Hence,
$
\|J^k\|_{\infty} \leq \max_\ell |\lambda_\ell|^k \sum_{m=0}^{s_\ell - 1} \binom{k}{m} |\lambda_\ell|^{-m},
$
where \( s_\ell $ is the size of the largest Jordan block associated with \( \lambda_\ell $.
Using the bound \( \binom{k}{m} \leq k^m \), we get for some constant \( \chi > 0 \), independent of \( k \),
$
\|J^k\|_{\infty} \leq \max_\ell |\lambda_\ell|^k k^{s_\ell - 1} \sum_{m=0}^{s_\ell - 1} |\lambda_\ell|^{-m} \leq \chi \rho^k k^n,
$
Therefore, there exists a constant \( C = \|U\|_{\infty} \|U^{-1}\|_{\infty} \chi $, independent of \( k $, such that
$\|M^k\|_{\infty} \leq C \rho^k k^n.
$
\end{proof}
}

{\color{black}
\begin{lemma}\label{lemma:geometric_series}
Let \( \lambda \in (0, 1) \) and \( n \in \mathbb{N} \). Then
$
\sum_{k=0}^{\infty} k^n \lambda^k = \frac{\lambda\mathfrak{p}_{n-1}(\lambda)}{(1 - \lambda)^{n+1}},
$
where \( \mathfrak{p}_{n-1}(\lambda) \) is a polynomial of degree $ n-1$ satisfying $\mathfrak{p}_{n-1}(0)=1$.
\end{lemma}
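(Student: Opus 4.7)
The plan is to proceed by induction on $n$, using the differential operator $\lambda\,d/d\lambda$ to climb from one power to the next. Set
\[
S_n(\lambda):=\sum_{k=0}^{\infty}k^n\lambda^k,
\]
which converges absolutely for $\lambda\in(0,1)$, so term-by-term differentiation is legitimate and yields the key identity $S_{n+1}(\lambda)=\lambda\,\frac{d}{d\lambda}S_n(\lambda)$.

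For the base case $n=1$, I would start from the geometric series $\sum_{k\ge 0}\lambda^k=1/(1-\lambda)$ and differentiate once to obtain $S_1(\lambda)=\lambda/(1-\lambda)^2$, which matches the claim with $\mathfrak{p}_0(\lambda)=1$ (degree $0$, value $1$ at $0$).

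For the inductive step, assume $S_n(\lambda)=\lambda\,\mathfrak{p}_{n-1}(\lambda)/(1-\lambda)^{n+1}$ with $\deg\mathfrak{p}_{n-1}=n-1$ and $\mathfrak{p}_{n-1}(0)=1$. Applying $\lambda\,d/d\lambda$ and simplifying the quotient rule, I obtain
\[
S_{n+1}(\lambda)=\frac{\lambda\bigl[(1-\lambda)\bigl(\mathfrak{p}_{n-1}(\lambda)+\lambda\,\mathfrak{p}_{n-1}'(\lambda)\bigr)+(n+1)\lambda\,\mathfrak{p}_{n-1}(\lambda)\bigr]}{(1-\lambda)^{n+2}}.
\]
Define $\mathfrak{p}_n(\lambda)$ to be the bracketed numerator. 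Two checks remain: first, that $\mathfrak{p}_n$ is indeed a polynomial of degree $n$ — the term $(1-\lambda)(\mathfrak{p}_{n-1}+\lambda\mathfrak{p}_{n-1}')$ has degree at most $n$, while $(n+1)\lambda\,\mathfrak{p}_{n-1}(\lambda)$ has degree exactly $n$, and the leading coefficients combine without cancellation (the $(1-\lambda)$ factor contributes a negative leading term which I would verify is strictly smaller in magnitude than $(n+1)$ times the leading coefficient of $\mathfrak{p}_{n-1}$); second, that $\mathfrak{p}_n(0)=\mathfrak{p}_{n-1}(0)=1$, which follows immediately by setting $\lambda=0$.

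The only mildly delicate point is confirming that the degree does not drop below $n$ in the inductive step. This reduces to a coefficient-counting argument on the leading terms of $\mathfrak{p}_{n-1}$, which can be made explicit (for instance, by tracking the leading coefficient $c_n$ of $\mathfrak{p}_{n-1}$ and showing the recursion $c_{n+1}=(n+1-n)c_n+\dots$ gives a strictly positive quantity). Everything else is routine algebraic manipulation, so I do not anticipate any serious obstacle in carrying out the proof.
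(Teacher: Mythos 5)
Your proposal is correct and follows essentially the same route as the paper's own proof: induction on $n$, with the base case obtained by differentiating the geometric series and the inductive step carried out via the operator $\lambda\,\frac{d}{d\lambda}$ applied to the quotient $\lambda\,\mathfrak{p}_{n-1}(\lambda)/(1-\lambda)^{n+1}$. Your extra check that the leading coefficient does not cancel (it in fact stays equal to $1$, since the leading terms combine as $(n+1)c_n-nc_n=c_n$) is a worthwhile refinement, as the paper only asserts the degree is \emph{at most} $n-1$.
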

\begin{proof}
Proceed by induction on $n\in\mathbb{N}$. For \( n = 1 \) 
{\small{$$
\sum_{k=0}^{\infty} k\lambda^k =\lambda  \sum_{k=0}^{\infty} k\lambda^{k-1}=\lambda \frac{\mathrm{d}}{\mathrm{d}\lambda }(\frac{1}{1-\lambda})=\frac{\lambda}{(1 - \lambda)^2},
$$}}

\noindent which is in the desired form with $\mathfrak{p}_0(\lambda) = 1 $.
Assume by induction that the formula holds for $n - 1$,
{\small{$$\sum_{k=0}^{\infty} k^n \lambda^k = \lambda \frac{d}{d\lambda} \left( \sum_{k=0}^{\infty} k^{n-1} \lambda^k \right)
= \lambda \frac{d}{d\lambda} \left( \frac{\lambda\mathfrak{p}_{n-2}(\lambda)}{(1 - \lambda)^n} \right).
$$}}

\noindent By differentiating we obtain
{\small{\[
 \frac{d}{d\lambda} \!\!\left( \frac{\lambda\mathfrak{p}_{n-2}(\lambda)}{(1 - \lambda)^n} \right)
\!=\! \frac{(\mathfrak{p}_{n-2}(\lambda)+\lambda \mathfrak{p}'_{n-2}(\lambda) )(1-\lambda)+ n \lambda\mathfrak{p}_{n-2}(\lambda)}{(1 - \lambda)^{n+1}}.
\]}}

\noindent where the numerator is a polynomial of degree at most $n-1$, say $\mathfrak{p}_{n-1}$, with the property $\mathfrak{p}_{n-1}(0)=\mathfrak{p}_{n-2}(0)=1.$
Multiplying by \( \lambda \), we get the result.
\end{proof}
}
{\color{black}
\begin{lemma} \label{lemma:U_st}
If $z(0)\geq \mathds{1} $, then $\|U(t)\|_{\infty}=1$, $\forall t\in\mathbb{N}$.
\end{lemma}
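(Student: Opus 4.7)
The plan is to exploit non-negativity of the entries of $U(t)$, so that $\|U(t)\|_\infty$ coincides with the maximum row sum, and then to verify directly that this maximum equals $1$. I split the row sums into two blocks. For each of the first $n$ rows, the contribution is $\alpha_v\sum_{w}P_{vw}+(1-\alpha_v)$, which reduces to $\alpha_v+(1-\alpha_v)=1$ by row-stochasticity of $P$. For the last row, I factor out $1/\mathds{1}_n^\top z(t)$ and compute the numerator using $AP\mathds{1}_n=A\mathds{1}_n$: the numerator becomes $\mathds{1}_n^\top(AP+(I-A))\mathds{1}_n=\mathds{1}_n^\top\mathds{1}_n=n$. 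Hence the $(n+1)$-st row sum equals $n/\mathds{1}_n^\top z(t)$, which is at most $1$ precisely when $\mathds{1}_n^\top z(t)\ge n$.

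The second step is to propagate the hypothesis $z(0)\ge \mathds{1}_n$ so as to guarantee $\mathds{1}_n^\top z(t)\ge n$ for every $t$. I would actually prove the stronger componentwise inequality $z(t)\ge\mathds{1}_n$ by induction on $t$. Using the linear recursion $z(t+1)=AP\,z(t)+(I-A)\mathds{1}_n$ already derived in the proof of Proposition~\ref{prop:zeta}, and exploiting the non-negativity of the entries of $AP$, the inductive hypothesis yields $AP\,z(t)\ge AP\mathds{1}_n=A\mathds{1}_n$, whence $z(t+1)\ge A\mathds{1}_n+(I-A)\mathds{1}_n=\mathds{1}_n$. Summing components then gives $\mathds{1}_n^\top z(t)\ge n$ for all $t\in\mathbb{N}$.

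Combining the two steps, every row sum of $U(t)$ is bounded by $1$, and the top $n$ of them attain $1$ exactly, so $\|U(t)\|_\infty=1$ as claimed.

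The argument is largely bookkeeping; the only place where care is needed is the interplay between the row-stochasticity identity $P\mathds{1}_n=\mathds{1}_n$ and the componentwise monotonicity of $z(t)$. The condition $z(0)\ge\mathds{1}_n$ is used in an essential way precisely to make the $(n+1)$-st row sum non-expansive, which is why the same hypothesis reappears in the statement of Theorem~\ref{proposition:convergence_s}.
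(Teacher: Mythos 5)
Your proof is correct and follows essentially the same route as the paper's: an induction establishing $z(t)\ge\mathds{1}_n$ from $z(0)\ge\mathds{1}_n$, followed by the observation that the first $n$ row sums of $U(t)$ equal $1$ exactly while the last row sum is $n/\mathds{1}_n^{\top}z(t)\le 1$. You merely spell out the computations that the paper leaves implicit.
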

\begin{proof}
Since 
$P$ is row-stochastic and $z(0)\geq\1$, one can show by induction on $t\in\N$ that $z(t)\geq\1$ for all $t\in\N$. This implies that the sum of last row is not larger than 1. Moreover, the sum of the entries in each of the remaining rows is exactly 1. Therefore $\|U(t)\|_{\infty}=1$ for all $t\in\N$.
\end{proof}}
{\color{black}We will use of the following known result, which holds for any submultiplicative matrix norm.}}
\begin{lemma}[Theorem 4.1 in \cite{Stewart2003}] \label{lemma:gamma0}
{\color{black} Let the square matrix $Q$ have a simple dominant eigenvalue $\lambda_1(Q)=1$ with multiplicity $1$ and let $\widetilde{x}$ be the right eigenvector associated to this eigenvalue. Let $\Delta Q(k)$, for $k\in \N$, be a sequence of matrices of suitable dimension and define}

{\small{\begin{equation} \label{def:Product}
R(t) := \prod_{\ell=1}^t(Q+\Delta Q(\ell)).
\end{equation}}}
    If $
    \sum_{s=1}^\infty \|\Delta Q(s)\|<\infty,
    $
    then
   $
    \lim_{t\rightarrow\infty} R(t) = \widetilde{x}\phi^{\top}
    $
    for a suitable vector $\phi$.
    The exponential rate of convergence is not greater than
    $
    \max\{\rho, \sigma\},
    $
    where $\rho $ is the magnitude of the largest-in-magnitude of the subdominant eigenvalues of $\widetilde{U}$ and
    $
    \sigma = \limsup_t \|\Delta Q(t)\| ^{1/t}.
    $
    \hfill$\square$
\end{lemma}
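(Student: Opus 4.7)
The plan is to exploit the spectral structure of $Q$ and separate the evolution of $R(t)$ along the dominant eigendirection from that along its complement. Let $\widetilde{y}$ be the left eigenvector of $Q$ for eigenvalue $1$, normalized so that $\widetilde{y}^{\top}\widetilde{x}=1$, set $E:=\widetilde{x}\widetilde{y}^{\top}$, and write $Q=E+S$ with $S:=Q-E$. Since $1$ is simple and strictly dominant, $E$ is the spectral projector onto the corresponding eigenspace, so $ES=SE=0$ and $\rho(S)=\rho<1$; in particular $Q^{k}=E+S^{k}$ with $\|S^{k}\|\le c\,\rho_{1}^{k}$ for every $\rho_{1}>\rho$, hence $C_{Q}:=\sup_{k}\|Q^{k}\|<\infty$. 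Interpreting the product in the order $R(t+1)=(Q+\Delta Q(t+1))R(t)$ with $R(0)=I$, I would split $R(t)=ER(t)+FR(t)$ with $F:=I-E$ and study the two pieces separately.

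The first technical step is a uniform bound on $\|R(t)\|$. The variation-of-constants identity
\begin{equation*}
R(t)=Q^{t}+\sum_{\ell=1}^{t}Q^{t-\ell}\,\Delta Q(\ell)\,R(\ell-1)
\end{equation*}
combined with the assumption $\sum_{\ell}\|\Delta Q(\ell)\|<\infty$ gives, via discrete Gr\"onwall, $\|R(t)\|\le C_{Q}\exp\bigl(C_{Q}\sum_{\ell=1}^{\infty}\|\Delta Q(\ell)\|\bigr)<\infty$. Next, using $\widetilde{y}^{\top}Q=\widetilde{y}^{\top}$ and $QE=EQ=E$, I would derive the clean recursions $\widetilde{y}^{\top}R(t+1)-\widetilde{y}^{\top}R(t)=\widetilde{y}^{\top}\Delta Q(t+1)R(t)$ and $FR(t+1)=S\,FR(t)+F\,\Delta Q(t+1)R(t)$. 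Summability of $\|\Delta Q(\ell)\|$ together with the uniform bound on $\|R(\ell)\|$ makes $\widetilde{y}^{\top}R(t)$ a Cauchy sequence, converging to some $\phi^{\top}$. Iterating the second recursion gives $FR(t)=S^{t}F+\sum_{\ell=1}^{t}S^{t-\ell}F\,\Delta Q(\ell)R(\ell-1)$, whose norm is a convolution of the exponential decay of $\|S^{k}\|$ with the summable perturbation sequence and therefore tends to zero. Combining the two pieces yields $R(t)=\widetilde{x}\,(\widetilde{y}^{\top}R(t))+FR(t)\to\widetilde{x}\,\phi^{\top}$.

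To extract the exponential rate $\max(\rho,\sigma)$, I would refine the convolution estimate. Fix any $\rho_{1}>\rho$ and $\sigma_{1}>\sigma$; then $\|S^{k}\|\le c\,\rho_{1}^{k}$ and eventually $\|\Delta Q(\ell)\|\le \sigma_{1}^{\ell}$, so an elementary geometric-sum estimate shows $\sum_{\ell=1}^{t}\rho_{1}^{t-\ell}\sigma_{1}^{\ell}=O\bigl(\max(\rho_{1},\sigma_{1})^{t}\bigr)$, with a similar tail bound $\|\widetilde{y}^{\top}R(t)-\phi^{\top}\|\le C\sum_{\ell>t}\sigma_{1}^{\ell}=O(\sigma_{1}^{t})$ for the Cauchy part. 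Letting $\rho_{1}\downarrow\rho$ and $\sigma_{1}\downarrow\sigma$ closes the argument. The main obstacle will be the coupling between the uniform bound on $\|R(t)\|$ and the decay of $FR(t)$: because the former quantity enters the convolution estimate for the latter, a sloppy Gr\"onwall step can inflate the rate and blur the distinction between $\rho$ and $\sigma$, so care is needed in the borderline case $\rho\approx\sigma$, where Jordan-block polynomial factors of the type appearing in Lemma~\ref{lemma:norm_inf} would also need to be absorbed into the chosen $\rho_{1}$ and $\sigma_{1}$.
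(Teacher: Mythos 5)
You are proving a statement that the paper itself does not prove: Lemma~\ref{lemma:gamma0} is imported verbatim as Theorem~4.1 of \cite{Stewart2003} and closed with a $\square$, so there is no in-paper argument to compare against. On its own merits, your proof is correct and is essentially the standard route to Stewart's result: the spectral-projector splitting $Q=E+S$ with $E=\widetilde{x}\widetilde{y}^{\top}$, $ES=SE=0$ and $\rho(S)=\rho<1$ is valid (note $QE=EQ=E$ and $E^2=E$ give these identities, and $S\widetilde{x}=0$ so the nonzero spectrum of $S$ is exactly the subdominant spectrum of $Q$); the variation-of-constants identity plus discrete Gr\"onwall correctly yields $\sup_t\|R(t)\|<\infty$ from summability of $\|\Delta Q(\ell)\|$; the two recursions for $\widetilde{y}^{\top}R(t)$ and $FR(t)$ are exact (since $FQ=SF=S$), the first giving a Cauchy sequence with limit $\phi^{\top}$ and the second a convolution of an exponentially decaying kernel with a summable sequence, which vanishes; and the rate bound $\max\{\rho,\sigma\}$ follows from your $\rho_1,\sigma_1$ relaxation, with the degenerate case $\rho_1=\sigma_1$ and any Jordan-block polynomial factors harmlessly absorbed when taking $t$-th roots and letting $\rho_1\downarrow\rho$, $\sigma_1\downarrow\sigma$. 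Two small remarks: the symbol $\widetilde{U}$ in the lemma's rate statement is a typo for $Q$, and you read it correctly as the subdominant spectral radius of $Q$; and your worry about the Gr\"onwall constant inflating the rate is unfounded, since that constant enters only multiplicatively and disappears under $\limsup_t\|\cdot\|^{1/t}$.
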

\medskip

We let $U(t)=\widetilde{U}+\Delta U(t)$ with
{\small{\begin{align} \label{eq:decomposition} \Delta U(t)&={\textstyle \left( \frac{1}{\mathds{1}_n^{\top}z(t)} - \frac{1}{n}\right)}\begin{bmatrix} 
0_n & 0 \\ 
\mathds{1}_n^{\top}AP & \mathds{1}_n^{\top}(I-A)\mathds{1}_n
\end{bmatrix}.\end{align}}}

\noindent Notice that
$
\|\Delta U(t)\|_{\color{black}\infty}   \leq\left|\frac{1}{\mathds{1}_n^{\top}z(t)} - \frac{1}{n} \right| {\color{black}n}.
$
{\color{black} The connectivity and aperiodicity assumptions imply that $AP$ is Schur stable with spectral radius $0<\lambda_1<1$. 
Then
{\small{\begin{align*}
&\left| \frac{1}{\mathds{1}_n^{\top}z(t)} - \frac{1}{n} \right|=\frac{\left| \mathds{1}_n^{\top}(z(t)-\mathds{1}_n) \right|}{n\,\mathds{1}_n^{\top}z(t)}\leq\frac{\|(AP)^t\|_1\|z(0)-\mathds{1}_n\|_1}{n^2}\\
&\quad\leq \frac{\|(AP)^t\|_{\infty}\|z(0)-\mathds{1}_n\|_1}{n}\leq \chi\frac{\|z(0)-\mathds{1}_n\|_1\lambda_1^tt^n}{n}
\end{align*}}}
\noindent where the last inequality follows from Lemma \ref{lemma:norm_inf}. Since 
$
\sum_{\ell=1}^{\infty}\|\Delta U(\ell))\|_{\infty} <\infty, 
$}
the convergence of the dynamics is then a direct application of Lemma~\ref{lemma:gamma0}.

Iterating the decomposition of matrix $U(k)=\widetilde{U}+\Delta U(k)$ and using the triangular inequality we have
{\small{\begin{align*}&\|\prod_{k=0}^{t}U(k)-\widetilde{U}^{t+1}\|_{\color{black}\infty}\leq\sum_{k=0}^t\|\widetilde{U}^{k}\|_{\color{black}\infty}\|\Delta U(t-k)\|_{\color{black}\infty}\!\!\!\!\prod_{\ell=0}^{t-k-1}\!\!\!\|U(\ell)\|_{\color{black}\infty}.
    \end{align*}}}
    Since {\color{black}{$\|\widetilde{U}^\ell\|_{\infty}=\|U(\ell)\|_{\infty}=1$  for all $\ell\in\mathbb{N}$ (see Lemma \ref{lemma:U_st})}} and applying Lemma \ref{lemma:geometric_series} we get
   {\color{black}\small{ \begin{align*}
&\|\prod_{k=0}^{t}U(k)-\widetilde{U}^{t+1}\|_{\infty}\leq\chi\|z(0)-\mathds{1}\|\sum_{k=0}^t k^n\lambda_1^k\\
&\qquad\leq \chi\|z(0)-\mathds{1}\|\frac{\lambda_1\mathfrak{p}_n(\lambda_1)}{(1-\lambda_1)^{n+1}}    ,    
    \end{align*}
    where $\mathfrak{p}_n(\lambda_1)$ is a polynomial of degree equal to $n$ with $\mathfrak{p}_n(0)=1$.}} 
    By Perron-Frobenius theorem, the matrix power~$\widetilde{U}^{t}$ converges to $ \mathds{1}_{n+1}\tilde{\phi}^{\top}$ as $t\to\infty$.
    {\color{black}{Then, using triangular inequality and Lemma \ref{lemma:gamma0}, there exists a constant $C>0$ such that
    {\small{\begin{align*}\|{\phi}-\tilde{\phi}\|_1&\leq\|\mathds{1}_{n+1}{\phi}^{\top}-\prod_{k=0}^{t}U(k)\|_{\infty}\\
    & +\|\prod_{k=0}^{t}U(k)-\widetilde{U}^{t+1}\|_{\infty}+\|\widetilde{U}^{t+1}-\mathds{1}_{n+1}{\tilde{\phi}}^{\top}\|_{\infty}\\
    &\leq 
     \chi\|z(0)-\mathds{1}\|\frac{\lambda_1\mathfrak{p}_n(\lambda_1)}{(1-\lambda_1)^{n+1}}  +C \max\{\lambda_2(\widetilde U), \lambda_1\}^t,
    \end{align*}}}}}
where $\lambda_2(\widetilde U)<1$ is the largest of magnitudes of the subdominant eigenvalues of $\widetilde{U}$ (see Lemma~\ref{lemma:gamma0}).
    The statement is then obtained by letting $t$ go to $\infty.$

\section{Proof of Theorem~\ref{theorem:tutti_termini}}\label{app:C}
\begin{lemma} \label{lemma:U_subst}
Assume that for all $v\in \mathcal{V}$ there exists, in the graph associated to $P$, a path from $v$ to $w$ such that $\gamma_w>0$. If
$q_{\mathrm{tot}} \geq 1$ and $z(0)\geq \mathds{1} $, then $U(t)$ is Schur stable $\forall t\in\mathbb{N}$.
\end{lemma}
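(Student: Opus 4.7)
The plan is to show that $U(t)$ is a nonnegative matrix whose row sums are all bounded by $1$, that the $(n{+}1)$-th row sum is strictly less than $1$, and that in the directed graph associated with $U(t)$ every node has a directed path to some row with strict inequality. Schur stability of $U(t)$ will then follow from \cite[Lemma~5]{FRASCA2013212}, the same result already invoked in the proofs of Proposition~\ref{prop:zeta} and Proposition~\ref{prop:zeta2}.

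The essential preliminary step is to establish by induction on $t$ that $z(t)\geq \mathds{1}_n$ for all $t\in\mathbb{N}$. Using the recursion $z(t+1)=APz(t)+B\mathds{1}_n+q_{\mathrm{tot}}(I-A-B)\mathds{1}_n$ already derived in the proof of Proposition~\ref{prop:zeta2}, together with $AP\mathds{1}_n=A\mathds{1}_n=\alpha$, the inductive step yields $z(t+1)\geq \alpha+\beta+q_{\mathrm{tot}}\gamma\geq \alpha+\beta+\gamma=\mathds{1}_n$ whenever $q_{\mathrm{tot}}\geq 1$. In particular $\mathds{1}_n^{\top}z(t)\geq n$ uniformly in $t$. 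The row sums of $U(t)$ then follow directly from \eqref{expressions:Uc}: for $v\leq n$ the $v$-th row sum equals $\alpha_v+\beta_v=1-\gamma_v\leq 1$, and the last row sums to $(n-\sum_u\gamma_u)/\mathds{1}_n^{\top}z(t)\leq 1$. Since the connectivity hypothesis forces the existence of at least one $w^\star$ with $\gamma_{w^\star}>0$, we have $\sum_u\gamma_u>0$ and the $(n{+}1)$-th row sum is strictly less than $1$: node $n{+}1$ is deficient uniformly in $t$.

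The main obstacle is the reachability analysis: I need to argue that every node of the graph of $U(t)$ reaches a deficient node. Any $v\leq n$ with $\gamma_v>0$ is deficient itself; node $n{+}1$ is deficient; and any $v\leq n$ with $\beta_v>0$ has a direct edge to $n{+}1$ of weight $\beta_v$, so it reaches a deficient node in one step. The only remaining case is $v\leq n$ with $\beta_v=\gamma_v=0$, which forces $\alpha_v=1$. Here I would lift a path $v=v_0\to v_1\to\cdots\to v_k$ in the graph associated to $P$ with $\gamma_{v_k}>0$, granted by hypothesis, into the graph of $U(t)$. As long as $\alpha_{v_i}>0$ the edge $v_i\to v_{i+1}$ carries weight $\alpha_{v_i}P_{v_iv_{i+1}}>0$ and the traversal proceeds; at the first index $i$ with $\alpha_{v_i}=0$, either $\gamma_{v_i}>0$ (the traversal stops at a deficient node) or $\beta_{v_i}=1$ (the edge $v_i\to n{+}1$ of weight $1$ reaches the deficient last node). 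In every case a deficient row is reached, and \cite[Lemma~5]{FRASCA2013212} concludes the proof.
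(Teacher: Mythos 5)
Your proof is correct and follows essentially the same route as the paper's: induction (using $q_{\mathrm{tot}}\geq 1$ and the stochasticity of $P$) to get $z(t)\geq\mathds{1}_n$, sub-stochasticity of the rows of $U(t)$ with the last row strictly deficient, and Schur stability via \cite[Lemma~5]{FRASCA2013212}. Your reachability case analysis (lifting paths from the graph of $P$ and handling the nodes with $\alpha_v=0$ separately) is in fact more explicit than the paper's, which simply asserts that the connectivity assumption implies every node reaches a deficient row.
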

\begin{proof}
First, using stochasticity of $P$, it can be proved that $z(t)\geq \mathds{1}_n$ by induction on $t\in\mathbb{N}_0$. 
Consequently, the sum of the last row (and of all other rows) in $U(t)$ is not larger than 1, because
$\mathds{1}_n^{\top}AP+\mathds{1}_n^{\top}B\mathds{1}_n\leq\sum_i\alpha_i + \sum_i \beta_i\leq n\leq\mathds{1}_n^{\top}z(t).  
$ 
Finally, the connectivity assumption implies that every node is connected to 
a node whose corresponding row in the matrix $U(t)$ sums to less than one.  Applying Lemma 5 in \cite{FRASCA2013212}, we conclude that $U(t)$ is Schur stable. 
\end{proof}
\smallskip
We consider now the joint dynamics in \eqref{sist_completo}.
The dynamics in \eqref{sist_completo} can be written in the following form
{\small{\begin{equation} \label{eq: sist con i termini}
s^{(i)}(t+1)=\left( \prod_{k=0}^{t}U(k)\right) s^{(i)}(0) + q^{(i)} \sum_{k=0}^{{t}} \left( \prod_{j=k+1}^{t} U(j)\right) c(k).
\end{equation}}}
where $U(t)=\widetilde U+\Delta U(t)$ with $\widetilde{U}$ given in \eqref{eq:expressions_tilde} and
\begin{gather}\label{eq:dec}
 \Delta U(t)={\textstyle \left( \frac{1}{\mathds{1}_n^{\top}z(t)} - \frac{1}{\mathds{1}_n^{\top}z^*}\right)}\begin{bmatrix} 
0_n & 0 \\ 
\mathds{1}_n^{\top}AP & \mathds{1}_n^{\top}B\mathds{1}_n
\end{bmatrix},
\end{gather}
and $c(t)=\widetilde{c}+\Delta c(t)$ with $\widetilde{c}$ given in \eqref{eq:expressions_tilde}
\begin{gather}\label{eq:def_c}
\Delta c(t)= \begin{bmatrix} 0 \\ \mathds{1}_n^{\top}(I-A-B)\mathds{1}_n(\frac{1}{\mathds{1}_n^{\top}z(t)}-\frac{1}{\mathds{1}_n^{\top}z^*})
\end{bmatrix}. 
\end{gather}
{\color{black}From now on, we will denote the infinity norm by $\|\cdot\|$.
}
\begin{remark}\label{remark:bounds} By Proposition~\ref{prop:zeta2}, there exist $\chi_1$, $\chi_2>0$ such that $\|\Delta c(k)\|\leq \chi_1\|(AP)^k\|$ and $\|\Delta U(k)\|\leq \chi_2\|(AP)^k\|$.
\end{remark}
\begin{lemma}\label{lemma:decomposition}
The matrices defined in \eqref{expressions:Uc} satisfy
{\small{\begin{align*}
&\sum_{k=0}^t\prod_{s=k+1}^tU(s)c(k)=\sum_{k=0}^t\widetilde{U}^{t-k}\widetilde{c}+\sum_{k=0}^t\widetilde{U}^{t-k}\Delta{c}(k)\\
&\qquad+\sum_{k=0}^t\sum_{\ell=0}^{t-k-1}\widetilde{U}^{\ell}\Delta U(t-\ell)\prod_{s=k+1}^{t-\ell-1}U(s)c(k).
\end{align*}}}
\end{lemma}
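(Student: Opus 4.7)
The identity in Lemma~\ref{lemma:decomposition} is purely algebraic and follows from the two perturbative decompositions $U(s)=\widetilde U+\Delta U(s)$ and $c(k)=\widetilde c+\Delta c(k)$. My plan is to first establish a decomposition of each matrix product $\prod_{s=k+1}^{t}U(s)$ into $\widetilde U^{t-k}$ plus correction terms built from $\Delta U$, and only then multiply on the right by $c(k)$, sum over $k$, and finally apply the decomposition of $c(k)$ to the clean first piece.

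The key auxiliary identity I would prove first is
\begin{equation*}
\prod_{s=k+1}^{t}U(s)\;=\;\widetilde U^{t-k}\;+\;\sum_{\ell=0}^{t-k-1}\widetilde U^{\ell}\,\Delta U(t-\ell)\,\prod_{s=k+1}^{t-\ell-1}U(s),
\end{equation*}
with the empty-product convention $\prod_{s=k+1}^{k}U(s)=I$. I would prove this by induction on $m=t-k\ge 0$. The case $m=0$ is immediate, since both sides equal $I$ (the sum being empty). For the inductive step, I would write
\begin{equation*}
\prod_{s=k+1}^{t}U(s)=\bigl(\widetilde U+\Delta U(t)\bigr)\prod_{s=k+1}^{t-1}U(s),
\end{equation*}
isolate the $\Delta U(t)\prod_{s=k+1}^{t-1}U(s)$ term (which will contribute the $\ell=0$ summand on the right-hand side), and apply the inductive hypothesis to the remaining factor $\prod_{s=k+1}^{t-1}U(s)$ inside $\widetilde U\cdot\prod_{s=k+1}^{t-1}U(s)$. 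Multiplying the resulting expansion by $\widetilde U$ bumps $\widetilde U^{t-1-k}$ up to $\widetilde U^{t-k}$ and shifts the inner summation index $\ell\mapsto\ell+1$, so after re-indexing the two contributions combine into exactly the sum over $\ell=0,\ldots,t-k-1$ claimed above.

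Once the matrix identity is available, the rest is straightforward rewriting: multiplying both sides on the right by $c(k)$ and summing over $k=0,\ldots,t$ gives
\begin{equation*}
\sum_{k=0}^{t}\prod_{s=k+1}^{t}U(s)\,c(k)=\sum_{k=0}^{t}\widetilde U^{t-k}c(k)+\sum_{k=0}^{t}\sum_{\ell=0}^{t-k-1}\widetilde U^{\ell}\,\Delta U(t-\ell)\prod_{s=k+1}^{t-\ell-1}U(s)\,c(k),
\end{equation*}
and then splitting $c(k)=\widetilde c+\Delta c(k)$ only inside the first sum produces the three summands in the statement of the lemma, while the second (double) sum is already in the required form with $c(k)$ left unexpanded. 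The only real obstacle is bookkeeping: one must handle the empty-product convention consistently at the boundary values $k=t$ and $\ell=t-k-1$, and the re-indexing $\ell\mapsto\ell+1$ in the inductive step must be carried out carefully so that the range $0\le\ell\le t-k-1$ is recovered. Apart from that, every step is mechanical.
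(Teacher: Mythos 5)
Your proposal is correct and follows essentially the same route as the paper: expand $\prod_{s=k+1}^{t}U(s)$ via repeated application of $U(s)=\widetilde U+\Delta U(s)$ to obtain $\widetilde U^{t-k}$ plus the $\Delta U$ correction sum, then multiply by $c(k)$, sum over $k$, and split $c(k)=\widetilde c+\Delta c(k)$ in the leading term. Your explicit induction on $m=t-k$ merely formalizes what the paper compresses into ``by repeated application.''
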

\begin{proof}
By repeated application of $U(t)=U+\Delta U(t)$,
{\small{\begin{align*}
&\prod_{s=k+1}^tU(s)=(\widetilde U+\Delta U(t))\prod_{s=k+1}^{t-1}U(s)\\
&\quad=\widetilde U^{t-k}+\sum_{\ell=0}^{t-k-1}\widetilde{U}^{\ell}\Delta U(t-\ell)\prod_{s=k+1}^{t-\ell-1}U(s).\end{align*}}}
We conclude by multiplying the expression by $c(k)$, summing over $k\in\{0,\ldots,t\}$, and applying $c(k)=\widetilde{c}+\Delta c(k)$.
\end{proof}

Starting from \eqref{eq: sist con i termini} and the definition of $\widetilde{s}(t+1)$, we can apply Lemma~\ref{lemma:decomposition} to show that
{\small{\begin{align*}
&\|s^{(i)}(t+1)-\widetilde{s}^{\color{black}{(i)\star}}\|\\
&\leq\|\prod_{k=0}^tU(k)\||\|s^{(i)}(0)\| +q^{(i)}\sum_{k=0}^t\|\widetilde{U}^{t-k}\|\|\Delta{c}(k)\|\\&\quad+q^{(i)}\sum_{k=0}^t\sum_{\ell=0}^{t-k-1}\|\widetilde{U}^{\ell}\|\left\|\Delta U(t-\ell)\right\|\prod_{s=k+1}^{t-\ell-1}\left\|U(s)\right\|\left\|c(k)\right\|\\
&\leq \|\widetilde{U}^{t+1}\|\|s^{(i)}(0)\|+\sum_{k=0}^t\sum_{\ell=0}^{t-k-1}\|\widetilde{U}^{\ell}\|\left\|\Delta U(t-\ell)\right\|\left\|s^{i}(0)\right\|\\
&\quad+q^{(i)}\sum_{k=0}^t\|\widetilde{U}^{t-k}\|\|\Delta{c}(k)\|\\
&\quad+q^{(i)}\sum_{k=0}^t\sum_{\ell=0}^{t-k-1}\|\widetilde{U}^{\ell}\|\left\|\Delta U(t-\ell)\right\|\left\|c(k)\right\|
\end{align*}}}
where the last inequality is obtained from Lemma~\ref{lemma:decomposition}, by triangular inequality and observing $\prod_{s=k+1}^{t-\ell-1}\left\|U(s)\right\|\leq1$. {\color{black}{
{\color{black}{The connectivity and aperiodicity assumptions imply that both AP and $\widetilde{U}$ have spectral radius in (0,1).}}
By Remark~\ref{remark:bounds}, Lemma \ref{lemma:norm_inf} and using the boundness of $c(k)$,  there exist positive constants $\kappa_1,\kappa_2$, and $\kappa_3$ such that  
{\small{\begin{align}\label{eq:last_expression}
&\|s^{(i)}(t+1)-\widetilde{s}^{(i)\star}\nonumber\|\\
&\quad\leq\kappa_1t^n\rho(\widetilde{U}) ^{t+1}+\kappa_2\sum_{k=0}^t(k(t-k))^n\rho(\widetilde{U})^{t-k}\rho(AP)^k\\
&\quad+\kappa_3\sum_{k=0}^t\sum_{\ell=0}^{t-k-1}(\ell(t-\ell))^n\rho(\widetilde{U})^{\ell}\rho(AP)^{t-\ell}\nonumber
\end{align}%
}}
We now distinguish two cases.
(i) Case $\rho(AP) < \rho(\widetilde U) $:
{\small{\begin{align*}
&\|s^{(i)}(t+1)-\widetilde{s}^{(i)\star}\|\leq\kappa_1t^n\rho(\widetilde{U}) ^{t+1}+\kappa_2t^n\rho(\widetilde{U})^{t}\sum_{k=0}^tk^n\bigg(\frac{\rho(AP)}{\rho(\widetilde{U})}\bigg)^k\\
&\quad+\kappa_3t^n\sum_{k=0}^t\sum_{\ell=0}^{t-k-1}\rho(\widetilde{U})^{t-k-1-\ell}\rho(AP)^{\ell+k+1}\\
&\quad\leq\kappa_1\rho(\widetilde{U}) ^{t+1}+\widetilde{\kappa}_2 t^n{\rho(\widetilde{U})^{t}}\\
&\quad+\widetilde{\kappa_3}t^n\rho(\widetilde{U})^{t}\sum_{k=0}^t(k+1)^n\frac{\rho(AP)^{k+1}}{\rho(\widetilde{U})^{k+1}}\sum_{\ell=0}^{t-k-1}\ell^n\frac{\rho(AP)^{\ell}}{\rho(\widetilde{U})^{\ell}}\\&\quad\leq\kappa_1\rho(\widetilde{U}) ^{t+1}+\kappa_4t^n\rho(\widetilde{U})^{t},
\end{align*}}}
where the last inequality, with $\kappa_4>0$, follows from Lemma~\ref{lemma:geometric_series}. The claim follows by renaming the constant terms.

(ii) Case $\rho(AP) > \rho(\widetilde U) $: Fix now $\epsilon\in(1-\rho(AP) ,1-\rho(\widetilde U) )$ then 
$\rho(AP)^s<\rho(AP)^t/(1-\epsilon)^{t-s}$ for any $s<t$. From \eqref{eq:last_expression},
there exist constants $\kappa_1>0,\kappa_2 >0$
{\small{\begin{align*}
&\|s^{(i)}(t+1)-\widetilde{s}^{(i)\star}\|\leq\kappa_1t^n\rho(\widetilde U)^{t+1}\\
&\qquad+\kappa_2 t^n\rho(AP)^{t}\sum_{k=0}^tk^n\frac{\rho(\widetilde U)^k}{(1-\epsilon)^k}\\
&\qquad+\kappa_3t^n\sum_{k=0}^t\rho(AP)^{t}\sum_{\ell=0}^{t-k-1}\ell^n\frac{\rho(\widetilde U)^{\ell}}{(1-\epsilon)^{\ell}}\nonumber\\
&\qquad\leq\kappa_1t^n\rho(\widetilde U) ^{t+1}+\kappa_4\rho(AP)^{t}t^{n+1}.
\end{align*}}}}}
Theorem~\ref{theorem:tutti_termini} follows because $\|s^{(i)}(t)-\widetilde{s}^{(i)\star}\|$ converges to~0.
\end{document}